\newtheorem{theorem}{Theorem}%  meant for continuous numbers
\newtheorem{corollary}{Corollary}[theorem]
\newtheorem{proposition}{Proposition}% 
\newtheorem{remark}{Remark}%
\newcommand{\eps}{\varepsilon}
\renewcommand{\phi}{\varphi}
\newcommand{\ftrunc}[1]{f_{({#1})}}
\newcommand{\ftruncE}[1]{\ftrunc{N}^\equilibrium}
\newcommand{\PP}{\mathcal{P}}
\newcommand{\R}{\varmathbb{R}}
\newcommand{\inta}{\int_{\R^3} \int_0^\infty}
\newcommand{\da}{\,dI\,d\cc}
\newcommand{\G}{\mathscr{G}}
\newcommand{\TT}{\mathsf{T}}
\newcommand{\pp}{\mathsf{P}}
\newcommand{\RR}{\mathsf{R}}
\newcommand{\Sett}{\mathscr{S}}
\newcommand{\xx}{\mathbf{x}}
\newcommand{\vv}{\mathbf{v}}
\newcommand{\cc}{\bm{\xi}}
\newcommand{\CC}{\mathbf{C}}
\newcommand{\dI}{\,dI}
\newcommand{\dcc}{\,d\cc}
\newcommand{\kB}{k_B}
\newcommand{\kBm}{\frac{\kB}{m}}
\newcommand{\ikBm}{\frac{m}{\kB}}
\newcommand{\TK}{T^K}
\newcommand{\TI}{T^I}
\newcommand{\TKrel}{\TK_{\mathrm{rel}}}
\newcommand{\TIrel}{\TI_{\mathrm{rel}}}
\newcommand{\eIrel}{{\varepsilon^I_{\mathrm{rel}}}}
\newcommand{\eKrel}{{\varepsilon^K_{\mathrm{rel}}}}
\newcommand{\epsK}{\eps^K}
\newcommand{\epsI}{\eps^I}
\newcommand{\epsKinv}{\eps^{K,\,-1}}
\newcommand{\epsIinv}{\eps^{I,\,-1}}
\newcommand{\GE}{\G^{(K)}}
\newcommand{\GI}{\G^{(I)}}
\newcommand{\tes}{\tau_\mathrm{ES}}
\newcommand{\hatx}{\hat{x}_1}
\newcommand{\hatcx}{\hat{\xi}_1}
\newcommand{\hatcy}{\hat{\xi}_2}
\newcommand{\hatcz}{\hat{\xi}_3}
\newcommand{\hatcr}{\hat{\xi}_r}
\newcommand{\hatcc}{\hat{\cc}}
\newcommand{\hatci}{\hat{\xi}_i}
\newcommand{\hatvx}{\hat{v}_1}
\newcommand{\hatvy}{\hat{v}_2}
\newcommand{\hatvz}{\hat{v}_3}
\newcommand{\hatvi}{\hat{v}_i}
\newcommand{\hatvv}{\hat{\vv}}
\newcommand{\hatvj}{\hat{v}_j}
\newcommand{\hatrho}{\hat{\rho}}
\newcommand{\hatp}{\hat{\pp}}
\newcommand{\hatpij}{\hat{\pp}_{ij}}
\newcommand{\hatT}{\hat{T}}
\newcommand{\hatTK}{\hat{T}^K}
\newcommand{\hatTI}{\hat{T}^I}
\newcommand{\hatTrel}{\hat{T}_{\mathrm{rel}}}
\newcommand{\hatTKrel}{\hat{T}^K_{\mathrm{rel}}}
\newcommand{\hatTIrel}{\hat{T}^I_{\mathrm{rel}}}
\newcommand{\hateKrel}{\hat{\varepsilon}^K_{\mathrm{rel}}}
\newcommand{\hateIrel}{\hat{\varepsilon}^I_{\mathrm{rel}}}
\newcommand{\hatTT}{\hat{\TT}}
\newcommand{\hatq}{\hat{q}}
\newcommand{\hatf}{\hat{f}}
\newcommand{\hatI}{\hat{I}}
\newcommand{\hatA}{\hat{A}}
\newcommand{\hatAc}{\hat{A}_c}
\newcommand{\hatG}{\hat\G}
\newcommand{\hatGE}{\hat\G^{(K)}}
\newcommand{\hatGI}{\hat\G^{(I)}}
\newcommand{\hatQ}{\hat{Q}}
\newcommand{\hateps}{\hat\eps}
\newcommand{\hatepsK}{{\hat\eps}^K}
\newcommand{\hatepsI}{{\hat\eps}^I}
\newcommand{\hatphiI}{\hat\phi\left(\hatI\right)}
\newcommand{\normrho}{\hat{\hat\rho}}
\newcommand{\normv}{\hat{\hat v}}
\newcommand{\normT}{\hat{\hat T}}
\newcommand{\normTtr}{\hat{\hat T}^K}
\newcommand{\normTint}{\hat{\hat T}^I}
\newcommand{\dhatI}{\,d\hatI}
\newcommand{\dhatcx}{\,d\hatcx}
\newcommand{\dhatcc}{\,d\hatcc}
\newcommand{\dhatcr}{\,d\hatcr}
\newcommand{\abs}[1]{\left|{#1}\right|}
\newcommand{\Natural}{\mathbb{N}}
\newcommand{\Rthree}{{\mathbb{R}^3}}
\newcommand{\phiI}{\varphi\left(I\right)}
\newcommand{\Lap}[3]{\mathcal{L}\left[#1\left(#2\right)\right]\left(#3\right)}
\newcommand{\invLap}[3]{\mathcal{L}^{-1}\left[#1\left(#2\right)\right]\left(#3\right)}
\renewcommand{\geq}{\geqslant}
\renewcommand{\leq}{\leqslant}
\begin{document}

\title[A Novel ES-BGK Model for Non-Polytropic Gases]%
{A Novel ES-BGK Model for Non-Polytropic Gases with Internal State Density Independent  of the Temperature}

%%=============================================================%%
%% Prefix	-> \pfx{Dr}
%% GivenName	-> \fnm{Joergen W.}
%% Particle	-> \spfx{van der} -> surname prefix
%% FamilyName	-> \sur{Ploeg}
%% Suffix	-> \sfx{IV}
%% NatureName	-> \tanm{Poet Laureate} -> Title after name
%% Degrees	-> \dgr{MSc, PhD}
%% \author*[1,2]{\pfx{Dr} \fnm{Joergen W.} \spfx{van der} \sur{Ploeg} \sfx{IV} \tanm{Poet Laureate} 
%%                 \dgr{MSc, PhD}}\email{iauthor@gmail.com}
%%=============================================================%%

\author[1,2]{\fnm{Takashi} \sur{Arima}}\email{arima@tomakomai-ct.ac.jp}
%\email{takashi.arima@unibo.it}

\author[2,3]{\fnm{Andrea} \sur{Mentrelli}}\email{andrea.mentrelli@unibo.it}
%\equalcont{These authors contributed equally to this work.}

\author*[2]{\fnm{Tommaso} \sur{Ruggeri}}\email{tommaso.ruggeri@unibo.it}
%\equalcont{These authors contributed equally to this work.}

%\affil[0]{\orgdiv{Department}, \orgname{Organization}, \orgaddress{\street{Street}, \city{City}, \postcode{610101}, \state{State}, \country{Country}}}

\affil[1]{\orgdiv{Department of Engineering for Innovation}, \orgname{National Institute of Technology, Tomakomai College}, \orgaddress{\city{Tomakomai}, \postcode{059-1275}, \country{Japan}}}

\affil[2]{\orgdiv{Department of Mathematics \& Alma Mater Research Center on Applied Mathematics (AM$^2$) }, \orgname{University of Bologna},\orgaddress{\street{Via Saragozza 8}, \city{Bologna}, \postcode{40123}, \country{Italy}}}

%\affil[3]{\orgdiv{Alma Mater Research Center on Applied Mathematics (AM$^2$)}, \orgname{University of Bologna}, \orgaddress{\street{Via Saragozza 8}, \city{Bologna}, \postcode{40123}, \country{Italy}}}

\affil[3]{\orgdiv{Istituto Nazionale di Fisica Nucleare (I.N.F.N.), Sezione di Bologna, I.S. FLAG}, \orgaddress{\street{Viale Berti Pichat 6/2}, \city{Bologna}, \postcode{40127}, \country{Italy}}}

\abstract{
A novel ES-BGK-based model of non-polytropic rarefied gases in the framework of kinetic theory is presented. Key features of this model are: an internal state density function depending only on the microscopic energy of internal modes (avoiding the dependence on temperature seen in previous reference studies); full compliance with the H-theorem; feasibility of the closure of the system of moment equations based on the maximum entropy principle, following the well-established procedure of Rational Extended Thermodynamics. 

The structure of planar shock waves in carbon dioxide (CO$_2$) obtained with the present model is in general good agreement with that of previous results, except for the computed internal temperature profile, which is qualitatively different with respect to the results obtained in previous studies, showing here a consistently monotonic behavior across the shock structure, rather than the non monotonic behavior previously found. 
}

\keywords{Boltzmann equation, ES-BGK model, Polyatomic gas, Non-Polytropic gas, Maximum entropy principle, Shock wave structure}

%%\pacs[JEL Classification]{D8, H51}
\pacs[MSC Classification]{76P05, 82C40, 76L05}

\maketitle

\section{Introduction \label{sec:introduction}}

The kinetic description of the nonequilibrium flow of rarefied polyatomic gases has been given much attention in recent years \cite{McCourt-1990, Nagnibeda-2009, Boyd-2017, Borsoni-2022}, and its importance for various applications, such as atmospheric re-entry problems, is now recognized \cite{Li-2009, Mathiaud-2018}.

One possible extension of the kinetic theory of monatomic gases to polyatomic gases was made, for the case of polytropic fluids, by Borgnakke and Larsen \cite{Borgnakke-1975}. According to the model presented in \cite{Borgnakke-1975}, the distribution function $f\equiv f(t,\xx,\cc,I)$ depends, in addition to time $t$, the space variable $\xx$, and  the molecular velocity $\cc$,  on an additional continuous variable $I$ representing the microscopic energy of the internal modes of a molecule, accounting for the energy exchange (other than the translational one) due to rotational and vibrational molecular motions.  This model, initially developed for Monte Carlo simulations of polyatomic gases, was later applied to the derivation of the generalized Boltzmann equation by Bourgat et al. \cite{Bourgat-1994}.

In this model, along the energy variable $I$, the state density function $\phiI$ needs to be introduced when constructing the macroscopic fields as moments of the distribution function $f$ integrated over the phase space of the velocity and the newly introduced microscopic energy variable. Being $\phi$ a state density, $\phiI dI$ represents the number of internal states between $I$ and $I+dI$, and it is defined as recovering the macroscopic total specific internal energy $\eps$. Therefore, the quantity $f\left(t,\xx,\cc,I\right) \phiI d\xx\,d\cc dI$ represents the number of molecules in the 7-dimensional phase space around a point $\left(\xx, \cc, I\right)$ at time $t$ \footnote{
    It should be remarked that the distribution function adopted by other authors, as for example in \cite{Kosuge-2019}, which is written as $f_*$ here, is related to the distribution function $f$ of the present paper as follows: 
    \begin{align}
    	f_*\left(t,\xx,\cc,I\right)= m^2 f\left(t,\xx,\cc,I\right)\, \phiI.
    \end{align}
}.

The internal  energy for polyatomic gases  is given  by the sum of the specific translational energy, $\epsK$, and the specific internal energy due to rotational and vibrational modes, $\epsI$:
\begin{align} \label{Poly:EnergyDef}
	\begin{split}
	&\eps = \epsK + \epsI, \\
	&\epsK = \frac{1}{\rho} \int_{\Rthree} \int_{0}^{\infty} \frac{1}{2} m C^2 f \phiI \da, \\
	&\epsI = \frac{1}{\rho} \int_{\Rthree} \int_{0}^{\infty} I f  \phiI \da,
	\end{split}
\end{align} 
where $\CC = \cc - \vv$ is the relative (peculiar) velocity, $\rho$ is the mass density, $\vv$ is the macroscopic (bulk) velocity, and $m$ denotes the molecular mass. 
For polytropic gases the specific internal energy $\eps$ is a linear function of the temperature:
\begin{align} \label{internale}
	\eps = \frac{D}{2} \kBm T, 
\end{align}
and the state density function $\phiI$ has the following expression:
\begin{align} \label{phi-poly}
	\phiI = I^{\left(D-5\right)/2}, 
\end{align}
where the gas-specific constant $D \, (\geq 3)$ represents the degrees of freedom of a molecule, $\kB$ is the Boltzmann constant, and $T$ denotes the absolute temperature.

At the kinetic level, it is assumed that the distribution function $f$ satisfies the Boltzmann equation
\begin{equation} \label{eq:Boltzmann}
	\frac{\partial f}{\partial t} + \xi_i \frac{\partial f}{\partial x_i} = Q\left(f\right),
\end{equation}
which is formally the same as the Boltzmann equation for monatomic gases, but with a collision integral $Q\left(f\right)$ taking now into account the influence of the internal degrees of freedom through the collisional cross-section. This model was proven to satisfy the H-theorem \cite{Bourgat-1994}. 

At the macroscopic level, in the framework of Rational Extended Thermodynamics (RET) \cite{Ruggeri-2021}, the system of 14 moments associated with Eq.~\eqref{eq:Boltzmann} was closed by Pavi\'c et al. \cite{Pavic-2013} \footnote{In this paper there are some typos that was corrected in \cite{Ruggeri-2021-MEP} and Chapter 7 of \cite{Ruggeri-2021} considering the polytropic case as particular case of nonpolytropic one.}
making use of the Maximum Entropy Principle (MEP) \cite{Kogan-1969, Dreyer-1987, Muller-1993}, stating that the  distribution function is the one that maximizes the entropy density 
\begin{align} \label{hgen}
	h = \inta H\left(f\right)\, \phiI \da,
\end{align}
with
\begin{align}\label{HHH}
	H\left(f\right) = - \kB f \log f,
\end{align}
under the constraint of prescribed moments (see for a brief history of MEP the Appendix  \ref{App:Max}). It is proven the equivalence of this approach to the one in which the system of model equations is obtained by means of a phenomenological closure by Arima et al. \cite{Arima-2012}. In subsequent years, the theory was successfully applied to the study of wave propagation, such as shock wave propagation in polyatomic gases (see \cite{Ruggeri-2021} and reference therein).

\smallskip 

The extension of the kinetic model of polytropic gases to non-polytropic gases, for which the internal energy depends on the temperature in a non-linear fashion, was undertaken by various authors following significantly different approaches. 

Kosuge et al. \cite{Kosuge-2019} proposed to replace, in Eq.~\eqref{internale} and Eq.~\eqref{phi-poly}, the constant $D$ with a temperature-dependent function, $D\left(T\right)$, allowing to model any arbitrary nonlinear dependence on the temperature of the internal energy $\eps\left(T\right)$ (a brief review of this reference model will be outlined in Sect.~\ref{sec:state-density}). This idea has the advantage of being simple, but it has two major weak points: Firstly, the resulting model equations with a model of the collisional term, which is discussed later, do not fulfill the H-theorem, as the authors themselves point out \cite{Kosuge-2019}; 
second, in the framework of this model, it is not possible to construct a closure of the moment equations in the spirit of RET by means of the usual procedure of MEP. This is because $\varphi\left(I, T\right)$ is now a function not only of the microscopic energy $I$ but also of the temperature $T$, which is, of course, a macroscopic field variable and therefore a moment of the distribution function itself. %This fact will be pointed out in Remark \ref{rem:MEP}.

In addition to that, the quantity $\phiI dI$ loses its neat physical meaning, since it does not represent anymore the number of internal states between $I$ and $I+dI$.

A different approach was proposed by Ruggeri and collaborators \cite{Ruggeri-2021, Arima-2016, Bisi-2018}, who noticed that $\phiI$, which should not depend on any field variables, is actually the inverse Laplace transform of a quantity that is related to the caloric equation of state of the internal modes, therefore leading to a state density depending only on $I$, but different from the one given in Eq.~\eqref{phi-poly}. 
In the framework of this model, the system of moment equations can be closed by means of the MEP, as well-established in RET, and field equations are indeed derived for non-polytropic gases in particular cases \cite{Ruggeri-2021,Arima-2021} including the 14 moment case \cite{Ruggeri-2021-MEP}. It is worth noticing that, in general, the procedure of the Laplace inversion required by this approach has to be carried out numerically, except for simple cases for which the Laplace inversion can be done analytically. However, it is also worth noticing that -- as it will be pointed out in Remark \ref{rem1} -- the Laplace inversion is actually not required explicitly as long as the field equations of macroscopic variables are needed \cite{Ruggeri-2021,Ruggeri-2021-MEP,Arima-2021}.

\smallskip 

When we deal with the Boltzmann equation, another critical model assumption has to be made concerning the explicit form of the collisional term $Q\left(f\right)$. For polyatomic gases, several models of simplified collisional terms have been proposed. We mention, among the others, the extension of the Bhatnagar-Gross-Krook (BGK) model \cite{Baranger-2020, Bisi-2017, Bisi-2018, Rahimi-2014, Struchtrup-1999}, the ellipsoidal statistical BGK (ES-BGK) model \cite{Holway-1966, Andries-2000, Brull-2009}, and the Fokker-Planck model \cite{Brau-1967, Gorji-2013, Mathiaud-2017}, all of which were originally developed for monatomic gases.
Among the above-mentioned models, the BGK-type collision term is -- due to its simplicity -- one of the most appealing and used models, but it has the well-known drawback of inducing by construction a Prandtl number equal to 1. 
In order to avoid this inconvenience in non-polytropic gases, Kosuge et. al., in their previously mentioned research paper \cite{Kosuge-2019}, proposed a model based on the ES-BGK collision term which allows to induce the correct Prandtl number, and studied in the framework of kinetic theory the structure of standing plane shock waves characterized by a large bulk viscosity, such as carbon dioxide (CO$_2$).

A model in which the molecular internal processes are treated in a more detailed way by accounting separately for the rotational and vibrational modes has been proposed by Arima et. al. in \cite{Arima-2017, Arima-2018}. In this model, two separate internal microscopic energies, $I^R$ for the rotational mode and $I^V$ for the vibrational mode, are introduced. In this case, two internal state densities, $\phi\left(I^R\right)$ and $\psi\left(I^V\right)$, are accordingly introduced. To model such processes, a generalized BGK model with $3$ relaxation times that satisfies the H-theorem is proposed \cite{Arima-2017}.

In the context of the ES-BGK model, a similar extension has been done by Dauvois et al. \cite{Dauvois-2021} and Mathiaud et al. \cite{Mathiaud-2022}. In these models, the H-theorem is satisfied; however, in contrast to the general case considered in \cite{Arima-2017} some particular assumptions were made: the contribution of the vibrational mode is treated as in the non-polytropic gas case, while it is assumed that the rotational mode behaves as in a polytropic gas. Since in these models the microscopic vibrational energy is assumed, by construction, to have only discrete energy levels, the state density function does not come into play. 
Although these models with separate internal modes allow to investigate the role of the molecular internal modes, the assumption of the relaxation equations of energies is needed in the construction of the ES-BGK model. 

\smallskip 

While previous studies have contributed significantly to the kinetic theory of non-polytropic gases, the development of an ES-BGK model with microscopic continuous energy levels is a task that remains to be accomplished: this is indeed the aim of the present paper. Specifically, we present here an ES-BGK model based on the microscopic continuous energy levels, $I$,  compatible with a state density of non-polytropic gases,  $\phiI$, independent of the temperature as it should be. The proposed model is conceptually different from all the models proposed in the above-mentioned papers \cite{Kosuge-2019, Dauvois-2021, Mathiaud-2022}, and it is proven to satisfy the H-theorem. At this stage of development of this new model, in order to avoid the assumption of the relaxation equations of the macroscopic rotational and vibrational energies as in \cite{Dauvois-2021, Mathiaud-2022}, the microscopic rotational and vibrational modes are treated as a whole for simplicity. This feature of the model has the additional advantage of allowing an easy integration of the model with experimental data concerning the total internal energy of the non-polytropic gas.
However, this assumption will be eliminated in a forthcoming refinement of the model. 

\smallskip 

A comparison of the numerical results pertaining planar shock waves obtained by adopting the present model to those obtained by adopting the reference model by Kosuge et al. \cite{Kosuge-2019} has been performed. Specifically, it will be shown that the model presented in \cite{Kosuge-2019} predicts a non-monotonic profile of the internal temperature through planar shock wave structures, while the correspondent profile obtained by the newly developed model, under the same conditions, is monotonic. All other macroscopic quantities appear to be, in all the examined cases, in a very good agreement with results presented in \cite{Kosuge-2019}.

\smallskip 

The paper is organized as follows. After summarizing, in Section~\ref{sec:state-density}, the relation between the state density and the internal energy, we introduce in Section~\ref{sec:newmodel} the new ES-BGK model for non-polytropic gases. In Section \ref{sec:reduced} the reduced ES-BGK model -- useful for reducing the computational cost of the numerical implementation of the model -- is introduced. Based on the reduced model, in Section~\ref{sec:numerical} we show the comparison of two ES-BGK models when the profiles of plane shock wave structures are computed. In Section~\ref{sec:conclusions}, concluding remarks will be outlined.

\section{Internal state density function \label{sec:state-density}}
Introducing the mass density $\rho$, the momentum density $\rho v_i$, and the energy density $\rho v^2/2 + \rho\eps$ as the first five moments of $f$:
\begin{equation} %\label{Poly:HydroMoments}
	\left(%
	\begin{array}{c}
		\rho \\
		\rho v_i \\
		\frac{\rho v^2}{2}  +  \rho \eps \\
	\end{array}%
	\right)
	= 
	\int_{\Rthree}  \int_{0}^{\infty}
	\left(
	\begin{array}{c}
		m \\
		m \xi_i \\
		\frac{m \xi^2 }{2}+   I  \\
	\end{array}
	\right)
	f\left(t, \xx, \cc, I\right)\, \phiI \dI \dcc,
\end{equation}
then from Eq.~\eqref{eq:Boltzmann}, taking into account the existence of the collision invariants
\begin{equation} %\label{Poly:Invariants}
	\left(m,\ m\xi_i,\ \frac{1}{2}m\xi^2 + I \right)^{T},
\end{equation}
we obtain the conservation laws of mass, momentum, and energy.

The total (specific) internal energy, 
\begin{equation}\label{energiatotale}
    \eps = \frac{1}{\rho} \int_{\Rthree} \int_{0}^{\infty} \left(\frac{1}{2} m C^2 +I\right) f \phiI \da =  \frac{1}{\rho} \int_{\Rthree} \int_{0}^{\infty} \left(\frac{1}{2} m C^2 +I\right) f^{(E)} \phiI \da = \eps_E,
\end{equation}
is an equilibrium quantity, while the (specific) translational energy, $\epsK$, and the (specific) internal mode energy, $\epsI$, defined in Eq.~\eqref{Poly:EnergyDef} are non-equilibrium ones:
\begin{equation}\label{temperatures}
	\eps = \epsK + \epsI =\epsK_E + \epsI_E,
\end{equation}
where $\epsK_E$ and $\epsI_E$ are, respectively, the equilibrium specific translational energy and the specific internal mode energy defined by
\begin{gather} 
    \epsK_E =\frac{1}{\rho} \int_{\Rthree} \int_{0}^{\infty} \frac{1}{2} m C^2 f^{(E)} \phiI \da, \label{Poly:EnergyDefEK} \\
	\epsI_E = \frac{1}{\rho}\int_{\Rthree} \int_{0}^{\infty} I f^{(E)} \phiI \da, \label{Poly:EnergyDefEI}
\end{gather}
$f^{(E)}$ being the equilibrium distribution function, which was obtained in \cite{Bourgat-1994} with considerations based on the H-theorem, and in \cite{Pavic-2013, Ruggeri-2021,Ruggeri-2021-MEP} requiring (similarly to the case of monatomic gas) the maximization of the entropy under the constraints of prescribed first five moments:
\begin{equation} \label{Poly:q(T)}
	f^{(E)} =  \frac{\rho}{m \, A\left(T\right) } \left( \frac{m}{2 \pi \kB T} \right)^{3/2} 
	\exp \left\{ -\frac{1}{\kB T} \left( \frac{1}{2} m C^2 + I \right) \right\} = f^{(M)}\, f^{(I)},
\end{equation}
where $f^{(M)}$ denotes the Maxwellian distribution function, and $f^{(I)}$ is the distribution function related to the internal mode:
\begin{equation} \label{fII}
	f^{(M)} = \frac{\rho}{m} \left( \frac{m}{2 \pi \kB T} \right)^{3/2} 
	\exp \left( -\frac{m C^2}{2 \kB T} \right), \qquad
	f^{(I)} = \frac{1}{A\left(T\right)}\exp \left( - \frac{I}{\kB T} \right),
\end{equation}
being
\begin{equation} \label{eq:A-norm}
	A\left(T\right)  = \int_{0}^{\infty} \exp \left( - \frac{I}{\kB T} \right) \phiI \dI
\end{equation}
a normalization factor such that
\begin{equation} \label{NfII}
	\int_0^\infty f^{(I)}\, \phiI \dI = 1.
\end{equation}
The function $A\left(T\right)$ can therefore be regarded, using the language of statistical mechanics, as the partition function for the molecular internal mode.

\smallskip 

For a rarefied non-polytropic gas, the total internal energy $\eps$ is a non-linear function of the temperature, the expression of which is given by the caloric equation of state\footnote{The temperature $T$ at kinetic level  in the non-polytropic gas can be defined as the inverse function of \eqref{caloric} with $\eps_E$ given by Eq.~\eqref{energiatotale}.}:
\begin{align} \label{caloric}
	\eps = \eps_E \left(T\right).
\end{align}
Once the specific heat $c_v\left(T\right) = d\eps/dT$ is known as a function of the temperature $T$, either as a result of statistical mechanics calculations, or by experimental measurements, the total internal energy $\eps$ is obtained as a function of the temperature $T$ by:
\begin{equation}% \label{eq:caloric}
	\eps\left(T\right) =  \int_{T_*}^{T} c_v\left(\tau\right)\, d\tau,
\end{equation}
where $T_*$ is a reference temperature. 
From Eq.~\eqref{Poly:EnergyDefEI}, Eq.~\eqref{Poly:q(T)} and Eq.~\eqref{eq:A-norm}, it is found (see \cite{Ruggeri-2021} and references therein):
\begin{equation} \label{eq:epsK+epsI}
	\epsI_E=\epsI_E\left(T\right) = \frac{1}{m} \int_{0}^{\infty} I f^{(I)}\, \phiI \dI = \kBm T^2 \frac{d \log A\left(T\right)}{dT}.
\end{equation}
Since it is known that
\begin{align}
	\epsK_E = \epsK_E\left(T\right) = \frac{3}{2}\kBm T, \label{eq:epsK}
\end{align}
if the caloric equation of state is given, the expression of the internal energy from Eq.~\eqref{temperatures} is obtained as
\begin{equation} \label{eq:epsI}
    \epsI_E\left(T\right) = \eps\left(T\right) - \epsK_E\left(T\right),
\end{equation}
and, from Eq.~\eqref{eq:epsK+epsI}, it is found
\begin{equation} \label{eq:AT}
	A\left(T\right) = A_0 \exp\left( \ikBm \int_{T_*}^{T} \frac{\epsI_E\left(\tau\right)}{\tau^2} \,d\tau\right),
\end{equation}
where $A_0$ is an inessential constant. Letting $s = 1 / \left(\kB T\right)$ and $\epsI_{E,s}\left(s\right) = \epsI_E\left(\frac{1}{\kB s}\right)$, Eq.~\eqref{eq:AT} can be written as
\begin{equation} \label{eq:As}
	A_s\left(s\right) = A\left(\frac{1}{\kB s}\right) = A_0 \exp\left( - \int_{1/\kB s^*}^{1/\kB s} m \epsI_{E,s}\left(\sigma\right) \,d\sigma\right).
\end{equation}
On the other hand, according to Eq.~\eqref{eq:A-norm} the  function $A_s$ is
\begin{equation} %\label{eq:As-norm}
	A_s\left(s\right) = \int_{0}^{\infty} e^{-s I} \phiI\,dI,
\end{equation}
from which it is seen that the function $A_s$ is the Laplace transform of $\phiI$ \cite{Arima-2016, Bisi-2018, Ruggeri-2021}:
\begin{equation} %\label{eq:As-2}
    A_s\left(s\right) = \Lap{\phi}{I}{s}.
\end{equation}
The internal state density function, $\phiI$, is therefore obtained as the inverse Laplace transform of the the function $A_s$ defined in Eq.~\eqref{eq:As}:
\begin{equation} \label{eq:invLap}
	\phiI = \invLap{A_s}{s}{I}.
\end{equation}

The inverse Laplace transform prescribed in Eq.~\eqref{eq:invLap} can be carried out analytically in simple cases, such as the case of a gas with constant specific heat $c_v$ (i.e. a polytropic gas), or the case of a gas with a specific heat $c_v$ which is a linear function of the temperature, which we show below. 

\smallskip

\begin{remark}\label{rem1}
Except for the cases of a gas with constant specific heat or linearly varying specific heat, in general (and realistic) cases of a gas with a specific heat which is a generic function of the temperature, the inverse Laplace transform prescribed by Eq.~\eqref{eq:invLap} is difficult (if even possible) to perform analytically, and we can perform it  only numerically.
On the other hand, it is remarkable that, in order to close -- making use of MEP -- the system obtained by taking moments {of the Boltzmann equation}, the explicit expression of $\phiI$ is actually not needed. In fact, it is proven that all coefficients in the constitutive equations are expressed by the integral of the equilibrium distribution function and, as a consequence, only the following type of integral appear:
\begin{align} \label{barI}
	\bar{A}_r = \int_0^{\infty} f^{(I)} \left(\frac{I}{\kB T}\right)^r \phiI dI, \qquad
    r \in \Natural,
\end{align}
which is a generalization of the moments appearing in Eq.~\eqref{NfII} and Eq.~\eqref{eq:epsK+epsI}. By differentiating Eq.~\eqref{eq:epsK+epsI} with respect to $T$, it is possible to find a recurrence formula such that the integrals $\bar{A}_r$ are determined for any $r \in \Natural$ by $\epsI_E\left(T\right)$ and its derivatives \cite{Arima-2021}. See also \cite{Ruggeri-2021-MEP, Ruggeri-2021} for particular cases.
\end{remark}

\smallskip

\begin{remark}
It is seen that the physical dimension of $A\left(T\right)/\phiI$ is the same as that of $I$ -- as it can be deduced from Eq.~\eqref{fII}$_2$, Eq.~\eqref{eq:A-norm} and Eq.~\eqref{NfII} -- which in turn corresponds to the dimension of $\kB T$. Furthermore, we notice that the physical dimension of $A\left(T\right)$ hinges on an inessential constant $A_0$, as shown in Eq.~\eqref{eq:AT}. In the case of a polytropic gas, which is discussed in the following, this physical dimension is deduced from Eq.~\eqref{eq:A0+As}.
\end{remark}

\subsection{Constant specific heat (polytropic gas)}

For a polytropic gas, the specific heat $c_v$ is constant, and it is expressed in terms of the molecular degrees of freedom $D$ as follows:
\begin{equation}
	\frac{c_v}{\kB / m} = \frac{D}{2}.
\end{equation}
As shown in Eq.~\eqref{internale}, the total internal energy $\eps$ is a linear function of the temperature; the internal energy due to the translational motion, $\epsK_E$, and the internal energy related to the internal degrees of freedom, $\epsI_E$, are given, respectively, by:
\begin{equation}
	\epsK_E\left(T\right) = \frac{3}{2} \kBm T, \qquad
	\epsI_E\left(T\right) = \frac{D-3}{2} \kBm T = \left(1+\alpha\right) \kBm T, 
\end{equation}
where
\begin{align}
	\alpha= \frac{D-5}{2}, \qquad \left(\alpha \geq -1\right),
\end{align}
or, 
\begin{equation}\label{Dcost}
	D= 5+ 2\alpha.
\end{equation}
From Eq.~\eqref{eq:AT} it is readily seen that
\begin{equation} %\label{eq:AT_cv_const-aht}
	A\left(T\right) = 
	%A_0 \exp\left( \int_{T_*}^{T} \frac{\left(1 + \alpha_0\right) \tau}{\tau^2} \,d\tau \right) =
	A_0 \exp\left( \int_{T_*}^{T} \frac{1 + \alpha}{\tau} \,d\tau \right) =
	A_0 \left(\frac{T}{T_*}\right)^{1 + \alpha},
\end{equation}
and 
\begin{equation}
	A_s\left(s\right) = A_0  \left(\frac{s_*}{s}\right)^{1 + \alpha}.
\end{equation}
From Eq.~\eqref{eq:invLap} it is obtained:
\begin{equation}
	\phiI = A_0  \frac{ I^{\alpha}}{\left({\kB T_*}\right)^{1 + \alpha}\Gamma\left(1 + \alpha\right)},
\end{equation}
and, letting,
\begin{equation} \label{eq:A0+As}
	A_0 = \left(\kB T_*\right)^{1 + \alpha}\Gamma\left(1 + \alpha\right), 
\end{equation}
it is found:
\begin{equation} \label{eq:A+phi-polytropic}
	\phiI = I^{\alpha}, \qquad A\left(T\right) =  \left(\kB T\right)^{1 + \alpha}\Gamma\left(1 + \alpha\right) ,
\end{equation}
which is compatible with Eq.~\eqref{phi-poly}.

\subsection{Linearly varying specific heat}

In the case of a specific heat, $c_v$, linearly depending on the temperature,
\begin{equation}
	\frac{c_v\left(T\right)}{\kB / m} =
	\frac{5}{2} + \alpha_0 + 2 \alpha_1\frac{T}{T_*}, 
\end{equation}
where $\alpha_0$ and $\alpha_1$ are dimensionless constants, on the basis of Eq.~\eqref{internale} -- which is valid only for polytropic gases -- we may write:
\begin{equation}
	\eps\left(T\right) = \frac{D\left(T\right)}{2} \kBm T, 
\end{equation}
where $D$, in contrast to Eq.~\eqref{Dcost}, is now a function of the temperature $T$:
\begin{equation}\label{dttt}
	D\left(T\right)=5 +2 \alpha_0  + 2 \alpha_1 \frac{T}{T_*}.
\end{equation}
The energy of the internal modes can now be written as:
\begin{equation}\label{lineare}
	\epsI_E\left(T\right) = \frac{D\left(T\right)-3}{2} \kBm T =\left(1 + \alpha_0  + \ \alpha_1 \frac{T}{T_*}\right) \kBm T,
\end{equation}
and, taking into account Eq.~\eqref{eq:AT}, and choosing $A_0$ as in Eq.~\eqref{eq:A0+As}, we obtain:
\begin{equation} \label{eq:AT2-o1}
    \begin{split}
    A\left(T\right)
	&= A_0 \exp\left\{ \int_{T_*}^{T} \left( \frac{1 + \alpha_0}{\tau} + \frac{\alpha_1}{T_*} \right) \,d\tau \right\} \\
	&= \left(\kB T\right)^{1 + \alpha_0} \Gamma\left(1 + \alpha_0\right) \, \exp \left\{\alpha_1\left(\frac{T}{T_*}-1\right)\right\},
    \end{split} 
\end{equation} 
and
\begin{equation} \label{eq:As2-o1}
	A_s\left(s\right) 
	= \exp\left(-\alpha_1\right) \Gamma\left(1 + \alpha_0 \right) \, \left(\frac{1}{s}\right)^{1 + \alpha_0}  \exp\left(\alpha_1 \frac{s_*}{s} \right).
\end{equation}
It can be proven that Eq.~\eqref{eq:As2-o1} has the following exact inverse Laplace transform: 
\begin{equation}\label{phi-gen}
	\phiI = \exp\left(-\alpha_1\right)\Gamma\left(1+\alpha_0\right) \, I^{\alpha_0} \left(\sqrt{\frac{\alpha_1 I}{\kB T_*}}\right)^{-\alpha_0}  \mathcal{I}_{\alpha_0}\left(2 \sqrt{\frac{\alpha_1 I}{\kB T_*}}\right),
\end{equation}
being $\mathcal{I}_{\alpha_0}\left(z\right)$ the modified Bessel function of the first kind or order $\alpha_0$. 
It can also be proven that
\begin{equation}
	\phiI \xrightarrow[]{\alpha_1 \to 0} I^{\alpha_0},
\end{equation}
and the state function $\phiI$ for a polytropic gas given in Eq.~\eqref{eq:A+phi-polytropic} is recovered as expected.

\smallskip 

As discussed in Sect.~\ref{sec:introduction}, in the model presented in \cite{Kosuge-2019}, the state density function -- based on Eq.~\eqref{lineare} -- would be defined for a gas with a linearly varying specific heat, as
\begin{align}\label{phi-Aoki}
	\phi\left(I, T\right) = I^{\left(D\left(T\right)-5\right)/2}, 
\end{align}
with $D\left(T\right)$ given in Eq.~\eqref{dttt}.
It is clearly seen that, despite corresponding to the same internal energy, the state density function $\phi$ given in Eq.~\eqref{phi-gen} is independent of the temperature $T$, while the state density function $\phi$ given in Eq.~\eqref{phi-Aoki} depends on the temperature $T$.

\section{Novel ES-BGK model for non-polytropic gas \label{sec:newmodel}}

In this Section, our novel ES-BGK model for non-polytropic polyatomic gases with temperature-dependent specific heat is described. In this model, the state density function $\phiI$ is not assumed to be given by Eq.~\eqref{eq:A+phi-polytropic}$_1$, which is valid only for polytropic gases; rather, $\phiI$ is obtained as the inverse Laplace transform of the function $A_s(s)$ given in Eq.~\eqref{eq:invLap}, in a fully consistent way.

\subsection{Nonequilibrium temperatures}
Before discussing the novel ES-BGK model, we introduce the nonequilibrium temperatures, $\TK$ and $\TI$, associated, respectively, to the molecular translational and internal modes. The temperatures $\TK$ and $\TI$ are implicitly defined by the internal energies of each mode, Eq.~\eqref{Poly:EnergyDef},  via the caloric equations of state of each mode:
\begin{align}
    \epsK = \epsK_E\left(\TK\right), \qquad
    \epsI = \epsI_E\left(\TI\right),
\end{align}
i.e.
\begin{align} \label{def:noneqT}
    \TK = \epsKinv_E\left(\epsK\right)=\frac{2\epsK}{3 \kBm}, \qquad 
    \TI = \epsIinv_E \left(\epsI\right),
\end{align}
$\epsKinv_E$ and $\epsIinv_E$ being the inverse functions of, respectively, $\epsK_E$ and $\epsI_E$, given in Eq.~\eqref{eq:epsK} and Eq.~\eqref{eq:epsI}. We remark that the translational temperature $\TK$ is related to the stress tensor 
\begin{align}
    t_{ij} =  - \inta m C_i C_j f \phiI \da,
    \label{eq:stress}
\end{align}
which is decomposed as follows 
\begin{align}\label{stress}
	t_{ij} = - \PP \delta_{ij} + \sigma_{\langle ij\rangle},
\end{align}
where $\sigma_{\langle ij\rangle}$ is the shear stress\footnote{Angular brackets denote the symmetric traceless part (deviatoric part) with respect to these indices.} and $\PP$ is the total pressure, the latter being the sum of the equilibrium pressure $p$, expressed as
\begin{align}
    p \equiv p\left(\rho, T\right) = \frac{2\rho}{3}\epsK_E\left(T\right) = \kBm \rho T,
\end{align}
and the dynamic pressure (nonequilibrium part of pressure) $\Pi = \PP - p$. From Eq.~\eqref{Poly:EnergyDef}$_2$ and Eq.~\eqref{eq:stress}, it is seen that the nonequilibrium energy of the translational mode is expressed in terms of the total pressure $\PP$ as follows:
\begin{align}\label{eK-P}
	\epsK=\epsK_E\left(\TK\right) = - \frac{1}{2\rho} t_{ll} = \frac{3}{2\rho} \PP. 
    % = \frac{1}{\rho} \inta \frac{m}{2}C^2 f \phiI \da;
\end{align}
Recalling the functional form of $\epsK_E$, given in Eq.~\eqref{eq:epsK}, we have
\begin{align}
    \PP = \kBm \rho \TK = p\left(\rho, \TK\right),
\end{align}
which, together with Eq.~\eqref{stress}, shows the relation between the stress tensor $t_{ij}$ and the translational temperature $\TK$.

\subsection{Model of collisional term}

The newly proposed ES-BGK model for non-polytropic polyatomic gases is the natural extension of the original model studied in \cite{Andries-2000}. The collision integral $Q$ is given by 
\begin{align} \label{eq:Q-ESBGK}
	Q\left(f\right) = \frac{1}{\tes}\left(\G - f\right),
\end{align}
where the relaxation time $\tes$ is a positive function of $\rho$ and $T$, and the distribution function $\G$ is determined as follows. 

Let $\Sett$ be the set of all non-negative, integrable distribution functions such that for any $\bar{\G} \in \Sett $, the following relations hold:
\begin{align}
    \begin{split}
        & \rho =  \inta m f \phiI \da = \inta m \, \bar{\G} \phiI \da ,\\
        & \rho v_i = \inta m \xi_i f \phiI \da = \inta m \xi_i \, \bar{\G} \phiI \da, \\
        & \rho \eps =  \inta  \left(\frac{m}{2}C^2+{I}\right) f \phiI \da = \inta \left(\frac{m}{2}C^2+{I}\right) \, \bar{\G} \phiI \da,
    \end{split}
    \label{momG5}
\end{align}
i.e. the first five moments of $\bar\G$ are equal to the corresponding moments of $f$.
It is important to mention that defining the set $\Sett$ as the set of the distribution functions $\bar\G$ for which Eq.~\eqref{momG5} holds, guarantees that the conservation laws are satisfied. In fact, multiplying the Boltzmann equation \eqref{eq:Boltzmann} with Eq.~\eqref{eq:Q-ESBGK} by each of the collision invariants $\left(m,\, m\xi_i,\, m\left(\xi^2+2\frac{I}{m}\right)\right)^T$ and integrating over the phase space with respect to molecular velocity and internal energy variable, the conservation laws of mass, momentum and energy are obtained:
\begin{align}
    \begin{split}
        &\frac{\partial \rho}{\partial t} + \frac{\partial }{\partial x_j}\left(\rho v_j\right) = 0 ,\\
		&\frac{\partial}{\partial t}\left(\rho v_i\right) + \frac{\partial}{\partial x_j}\left(\rho v_i v_j  - t_{ij} \right) = 0 ,\\
		&\frac{\partial}{\partial t} \left(\rho v^2 + 2\rho \eps\right)+ \frac{\partial}{\partial x_j} \left(\rho v^2 v_j + 2\rho \eps v_j - 2 t_{jk} v_k + 2q_j\right) = 0,
    \end{split}
    \label{conserve}
\end{align}
where $q_j$ is the heat flux defined by 
\begin{equation} \label{eq:tq} 
    q_j =  \inta  \frac{m}{2}\left(C^2 + 2\frac{I}{m}\right) C_j f \phiI \da.
\end{equation}
The distribution function $\G$ is determined by the following theorem.

\smallskip 

\begin{theorem}\label{theo:1}
Let us consider the following eleven moments of $\bar{\G} \in \Sett$:
\begin{align}\label{FG}
	\boldsymbol{F}^{\bar{\G}} = \begin{pmatrix}
		\rho \\ \rho v_i \\ \rho {\TT}_{ij} + \rho v_i v_j \\  2\rho \eIrel
	\end{pmatrix}
	 = m \inta \bm{\psi}\, \bar{\G} \phiI \da,
\end{align}
with $\bm{\psi} \equiv \left(1,\, \xi_i,\, \xi_i \xi_j,\, 2{I}/{m}\right)^T$, where we have introduced the second-order symmetric and positive definite tensor:
\begin{align}
    \label{momG11}
    \begin{split}
        \TT_{ij} = \frac{1}{\rho}\inta  m C_i C_j \, \bar{\G} \phiI \da,%\qquad \text{}.
        %\rho \eIrel = \inta I\, \G \phiI \da, 
        %\inta  \left(\frac{m}{2}C^2+{I}\right) \G \phiI \da = \rho \eps \qquad
        \end{split}
\end{align}
and the relaxation internal energy $\eIrel$:
\begin{equation}\label{momG11I}
    \eIrel = \frac{1}{\rho} \inta I\, \bar{\G} \phiI \da.
\end{equation}
Defining the entropy density in $\Sett $ as follows:
\begin{equation}\label{HG}
    h^{\bar{\G}} =  -\kB \inta \bar{\G} \log \bar{\G}  \phiI \da,
\end{equation}
the distribution function $\G \in \Sett $ that  maximizes the entropy \eqref{HG} under the constraint that the eleven moments of $\bar{\G}$ defined in Eq.~\eqref{FG} are prescribed, is 
\begin{equation} \label{def:G}
    \G = \GE\, \GI,
\end{equation}
with 
\begin{align} \label{eq:GE}
\begin{split}
    &\GE = \frac{\rho }{m\left(2\pi\right)^{3/2} \left(\det \TT\right)^{1/2} } \exp \left\{-\frac{1}{2} \left(\xi_i - v_i\right) \left(\TT^{-1}\right)_{ij} (\xi_j - v_j) \right\},  \\
    &\GI = \frac{1}{A\left(\TIrel\right)} \exp\left(-\frac{I}{\kB \TIrel}\right),
\end{split}
\end{align}
where $\TIrel$ is the relaxation temperature defined via the caloric equation of state given in Eq.~\eqref{eq:epsI}:
\begin{align} \label{def:TIrel2}
    \TIrel = \epsIinv_E\left(\eIrel\right). 
\end{align}
The entropy density given in Eq.~\eqref{HG} maximized by $\G$  has the following expression:
\begin{equation}\label{hGexpression}
    \begin{split} 
        h^\G 
        &= - \kBm \rho \left(\log \frac{\rho}{m\left(2\pi\right)^{3/2}\left(\det \TT\right)^{1/2} A\left(\TIrel\right)} - \frac{m\epsI_E\left(\TIrel\right)}{\kB \TIrel} - \frac{3}{2}\right).
    \end{split}
\end{equation}
\end{theorem}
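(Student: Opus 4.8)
\emph{Strategy.} The plan is to solve this as a constrained maximization by the method of Lagrange multipliers, and then to upgrade the resulting critical point to a genuine (and unique) global maximizer by a Gibbs-type relative-entropy inequality, which is available here precisely because the integrand $-\kB\,\bar{\G}\log\bar{\G}$ is strictly concave in $\bar{\G}$.

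First I would form the augmented functional by subtracting from $h^{\bar{\G}}$ in Eq.~\eqref{HG} the eleven constraints of Eq.~\eqref{FG}, introducing a scalar multiplier $\lambda$, a vector $\lambda_i$, a symmetric tensor $\lambda_{ij}$ and a scalar $\mu$ conjugate respectively to the components of $\bm{\psi}=\left(1,\,\xi_i,\,\xi_i\xi_j,\,2I/m\right)^T$. Setting the first variation with respect to $\bar{\G}$ to zero gives the pointwise stationarity condition
\begin{equation}
-\kB\left(\log\bar{\G} + 1\right) = m\left(\lambda + \lambda_i\xi_i + \lambda_{ij}\xi_i\xi_j\right) + 2\mu I ,
\end{equation}
so that the extremal $\G$ is the exponential of a function quadratic in $\cc$ plus a term linear in $I$. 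Since this exponent splits additively into a $\cc$-dependent and an $I$-dependent part, $\G$ factorizes as $\G=\GE\,\GI$, with $\GE$ a Gaussian in the molecular velocity and $\GI$ an exponential in the internal energy.

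Next I would fix the multipliers by imposing the eleven prescribed moments. For the velocity factor, completing the square shows that $\lambda_i$ fixes the Gaussian mean to $v_i$ while $\lambda_{ij}$ must be proportional to $\left(\TT^{-1}\right)_{ij}$; standard Gaussian integration then recovers both the covariance $\TT_{ij}$ prescribed in Eq.~\eqref{momG11} and the prefactor $\rho/\left[m\left(2\pi\right)^{3/2}\left(\det\TT\right)^{1/2}\right]$, reproducing $\GE$ in Eq.~\eqref{eq:GE}. For the internal factor, the requirement $\int_0^\infty\GI\,\phiI\dI = 1$ identifies the prefactor with $1/A\left(\TIrel\right)$ through Eq.~\eqref{eq:A-norm}, and matching the internal-energy moment Eq.~\eqref{momG11I} via the caloric relation Eq.~\eqref{eq:epsK+epsI} together with the definition $\TIrel=\epsIinv_E\left(\eIrel\right)$ of Eq.~\eqref{def:TIrel2} forces the rate $1/\left(\kB\TIrel\right)$, giving $\GI$; since the eleven moments refine the five defining $\Sett$, the resulting $\G$ automatically lies in $\Sett$. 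The maximal entropy Eq.~\eqref{hGexpression} then follows by direct substitution: writing $\log\G=\log\GE+\log\GI$ and using $\int_0^\infty\GI\,\phiI\dI=1$, the velocity integral yields the logarithmic term together with the constant $-3/2$ coming from $\TT_{ij}\left(\TT^{-1}\right)_{ij}=3$, while the internal integral yields the $-\log A\left(\TIrel\right)$ and $-m\epsI_E\left(\TIrel\right)/\left(\kB\TIrel\right)$ terms.

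\emph{Main obstacle.} The stationarity condition alone only produces a critical point; the decisive step is to certify it as the global maximizer. I would do this with the relative-entropy argument: for any competitor $\bar{\G}$ sharing the same eleven moments, the fact that $\log\G$ is an affine combination of the components of $\bm{\psi}$ implies $\inta\G\log\G\,\phiI\da=\inta\bar{\G}\log\G\,\phiI\da$, whence
\begin{equation}
h^\G - h^{\bar{\G}} = \kB\inta\bar{\G}\log\frac{\bar{\G}}{\G}\,\phiI\da \geq \kB\inta\left(\bar{\G}-\G\right)\phiI\da = 0 ,
\end{equation}
where the inequality uses $x\log(x/y)\geq x-y$ and the final equality holds because the zeroth moments coincide. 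This establishes maximality, and by strict concavity uniqueness, with no second-variation computation. Two well-posedness points must be checked en route: the positive definiteness of $\TT$ assumed in the statement is exactly what makes the velocity Gaussian integrable with $\det\TT>0$, and the invertibility of $\epsI_E$ (monotonicity of the internal energy in temperature) is what makes $\TIrel=\epsIinv_E\left(\eIrel\right)$ well defined.
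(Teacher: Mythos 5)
Your proposal is correct and follows the same core route as the paper: form the Lagrangian for the eleven constraints, impose pointwise stationarity to get an exponential of an affine combination of the components of $\bm{\psi}$, observe the resulting factorization $\G=\GE\,\GI$, fix the multipliers by Gaussian integration and by the internal-energy moment via $A\left(\TIrel\right)$ and Eq.~\eqref{eq:epsK+epsI}, and obtain Eq.~\eqref{hGexpression} by direct substitution. The one genuine difference is your final step: the paper stops at the stationary point and delegates its maximality to cited references (M\"uller--Ruggeri, Boillat--Ruggeri), whereas you close the gap yourself with the relative-entropy argument
\begin{equation}
h^\G - h^{\bar{\G}} = \kB\inta\bar{\G}\log\frac{\bar{\G}}{\G}\,\phiI\da \geq \kB\inta\left(\bar{\G}-\G\right)\phiI\da = 0,
\end{equation}
valid because $\log\G$ is affine in the constrained moment functions and the densities coincide. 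This is essentially the same concavity mechanism the paper deploys in Appendix~\ref{App:Max} to prove the inequality \eqref{hmaxx}, so your version is self-contained where the paper's is not, and it additionally yields uniqueness of the maximizer; what you lose relative to the paper is only a minor technical device, namely the Galilean-invariance identity \eqref{ident} that the paper uses to pass directly to peculiar velocities and conclude $\tilde{\lambda}_i=0$, which you replace by completing the square.
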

\begin{proof}
MEP states that the distribution function $\G \in \Sett $ which maximizes the entropy density \eqref{HG} with prescribed eleven moments \eqref{FG} is the solution of a variational problem with constraints associated to the functional 
\begin{equation} \label{eq:LGbar}
	L\left(\bar{\G}\right) = -\kB \inta \bar{\G} \log \bar{\G}  \phiI \da + 
	\boldsymbol{\Lambda} \cdot \left(\boldsymbol{F}^{\bar{\G}} - m\inta \bm{\psi} \, \bar{\G} \phiI \da\right),
\end{equation}
where $\boldsymbol{\Lambda} \equiv \left(\lambda,\, \lambda_i,\, \lambda_{ij},\, \mu\right)$ is the vector of the  Lagrange multipliers. In order to have an extremum the first variation with respect to $\bar{\G}$ must be equal to zero, i.e.\footnote{We observe that the MEP cannot be done in the form   \eqref{eq:variation} in the case in which the state density $\varphi(I,T)$ also depends on the temperature $T$.}
\begin{equation}
   \frac{\delta L}{\delta \bar{\G}}  = - k_B \inta \left(\log \bar{\G} + 1 + \frac{m}{k_B}\boldsymbol{\Lambda} \cdot \bm{\psi}\right)  \phiI \da = 0,
    \label{eq:variation}
\end{equation}
and the distribution function $\G$ maximizing the functional \eqref{eq:LGbar} is \cite{Muller-1998, Ruggeri-2021, Boillat-1997}:
\begin{equation} \label{eq:fmax}
	\G = \exp\left(-1-\frac{m}{\kB}\chi\right), 
\end{equation}
where 
\begin{equation}\label{ident}
	\chi =\boldsymbol{\Lambda}\cdot \bm{\psi}
    =\hat{\boldsymbol{\Lambda}}\cdot \hat{\bm{\psi}},
\end{equation}
with a hat on a quantity indicating its velocity independent part: $\hat{\bm{\psi}} \equiv \left(1,\, C_i,\, C_i C_j,\, 2I/m\right)^T$ and $\hat{\boldsymbol{\Lambda}} \equiv \left(\hat{\lambda},\, \hat{\lambda}_i,\, \hat{\lambda}_{ij},\, \hat{\mu}\right)$. The identity in \eqref{ident} is proved in  \cite{Ruggeri-1989}, and it is evident also by the fact that $\G$ is a scalar independent of the frame.
For later convenience, we write $\G$ as
\begin{align}\label{eq:fmax2}
	\G = \Omega\, e^{-\tilde{\lambda}_i C_i} e^{- \tilde{\lambda}_{ij}C_i C_j - \frac{2\hat{\mu} }{\kB}I },
\end{align}
where 
\begin{align}
	\Omega = \exp \left(- 1 - \frac{m}{\kB}\hat{\lambda}\right), \qquad 
	\tilde{\lambda}_{i} = \frac{m}{\kB} \hat{\lambda}_i, \qquad 
	\tilde{\lambda}_{ij} = \frac{m}{\kB} \hat{\lambda}_{ij}.
\end{align}
Given that
\begin{equation} \label{eq:integrals}
    \begin{split}
	&\int_{\Rthree} e^{-\tilde{\lambda}_{ij}C_iC_j} d\CC = \pi^{3/2}\left(\det \tilde{\boldsymbol{\lambda}}\right)^{-\frac{1}{2}},\\
	&\int_{\Rthree} C_k C_l e^{-\tilde{\lambda}_{ij}C_iC_j} d\CC = \frac{\pi^{3/2}}{2}\left( \tilde{\boldsymbol{\lambda}}^{-1}\right)_{kl}\left(\det \tilde{\boldsymbol{\lambda}}\right)^{-\frac{1}{2}},
    \end{split}
\end{equation}
where $\tilde{\boldsymbol{\lambda}}$ is the matrix the elements of which are $\tilde{\lambda}_{ij}$, and inserting Eq.~\eqref{eq:fmax2} into the right-hand side of Eq.~\eqref{FG}, we obtain $\tilde{\lambda}_i=0$ and \footnote{Hereafter, $\TT_{ij}$ and $\eIrel$ are given in Eqs.~\eqref{momG11} and ~\eqref{momG11I} by substituting $\bar{\G}$ with  $\G$.}
\begin{align}
	&\rho = m \Omega \pi^{3/2}\left(\det \tilde{\boldsymbol{\lambda}}\right)^{-1/2}A\left(\frac{1}{2\hat{\mu}}\right),\\
	&\rho {\TT}_{ij} = \frac{1}{2}m \Omega \pi^{3/2}\left( \tilde{\boldsymbol{\lambda}}^{-1}\right)_{ij}\left(\det \tilde{\boldsymbol{\lambda}}\right)^{-1/2}A\left(\frac{1}{2\hat{\mu}}\right),\\
	&2\rho \eIrel = -2 \kB \Omega \pi^{3/2}\left(\det \tilde{\boldsymbol{\lambda}}\right)^{-1/2}\frac{ d A\left(\frac{1}{2\hat{\mu}}\right)}{d \left({2 \hat{\mu}}\right)},
\end{align}
and then
\begin{align}
	&\Omega = \frac{\rho}{m \pi^{3/2} \left(\det \tilde{\boldsymbol{\lambda}}\right)^{-1/2}A\left(\frac{1}{2\hat{\mu}}\right)}, \label{lambdas1} \\
	&\left( \tilde{\boldsymbol{\lambda}}^{-1}\right)_{ij} = 2 \TT_{ij}, \label{lambdas2} \\
	&\eIrel = - \kBm \frac{ d }{d \left({2 \hat{\mu}}\right)} \log A\left(\frac{1}{2\hat{\mu}}\right). \label{lambdas3}
\end{align}
From Eq.~\eqref{lambdas2}, we have $(\det \tilde{\boldsymbol{\lambda}})^{-1} = 2^3 \det {\TT}$ and, since it can be seen that $\hat{\mu}$ has the physical dimension of inverse temperature, we introduce a new temperature $\TIrel$ defined as
\begin{align}
	\TIrel = \frac{1}{2\hat{\mu}}.
\end{align} 
Recalling Eq.~\eqref{eq:epsK+epsI}, Eq.~\eqref{lambdas3} suggests:
\begin{align}
    \eIrel = \epsI_E\left(\TIrel\right);
\end{align}
in other words, $\TIrel$ is determined by $\eIrel$ from the inverse function of the caloric equation of the state of  internal mode, as introduced in Eq.~\eqref{def:TIrel2}.

The explicit expression of the entropy density maximized by $\G$, i.e. $h^\G$, given in Eq.~\eqref{hGexpression}, is derived by substituting $\G$ into Eq.~\eqref{HG}. 
\end{proof}

\medskip

\subsection{Derivation of $\TT_{ij}$ as a function of physical quantities}
Following the discussion in \cite{Brull-2009}, we find that the tensor $\TT_{ij}$ is related to the physical macroscopic variables. We can draw a parallel with the results of the standard BGK model. Specifically, for the collisional term given by Eq.~\eqref{eq:Q-ESBGK} and Eq.~\eqref{def:G},  we require that the following six relations hold:
\begin{equation} \label{prod11}
    \frac{m}{\tes} \inta
    \begin{pmatrix}
        \displaystyle
        \xi^2 \\ %[10pt] \displaystyle
        \xi_{\langle i} \xi_{j \rangle} %\\[10pt]
    \end{pmatrix}
    \left(\G-f\right) \ \phiI \da
    = 
    \begin{pmatrix}
        \displaystyle 
        -2\frac{\rho}{\tau}\left(\epsK_E\left(\TK\right)-\epsK_E\left(T\right)\right) \\ 
        \displaystyle
        \frac{1}{\tau_\sigma} \sigma_{\langle ij\rangle} 
    \end{pmatrix},
\end{equation}
where $\tau = \tau\left(\rho,T\right)$ and $\tau_\sigma = \tau_\sigma\left(\rho,T\right)$ are relaxation times. In the standard BGK model this is an identity but with a common relaxation time. 
In contrast, we now require that $\epsK$ and $\sigma_{\langle ij\rangle}$  have different relaxation times; in such a way, we can have a physically more appropriate Prandtl number when we take the hydrodynamic limit.

Although we will explore the hydrodynamic limit in detail in Sect.~\ref{sec:CE}, to clarify the meaning of the production terms in Eq.~\eqref{prod11}, we present the field equations for $\epsK$ and $\sigma_{\langle ij \rangle}$, obtained by multiplying the Boltzmann equation \eqref{eq:Boltzmann} by $\left(m\xi^2,\, m \xi_{\langle i}\xi_{j \rangle}\right)^T$ and integrating each of the two resulting equations over the phase space with respect to the molecular velocity and the internal energy variable:
\begin{align}
    \begin{split}
        &\frac{\partial}{\partial t} \left(\rho v^2 + 2 \rho \epsK\right)+ \frac{\partial}{\partial x_k}\left(\rho v^2 v_k + \frac{10}{3} \rho \epsK v_k - 2 \sigma_{\langle kl \rangle}v_l + \hat{H}^0_{llk} \right) = 
        - \frac{2\rho}{\tau}\left(\epsK_E\left(\TK\right) -\epsK_E\left(T\right)\right),\\
		&\frac{\partial}{\partial t}\left(\rho v_{\langle i}v_{j\rangle} - \sigma_{\langle ij\rangle}\right) + \frac{\partial}{\partial x_k}\left( \rho v_{\langle i}v_{j\rangle}v_k + \frac{4}{3}\rho \epsK v_{\langle i} \delta_{j\rangle k} - \sigma_{\langle ij\rangle}v_k - 2 \sigma_{\langle k \langle i \rangle}v_{j\rangle} + \hat{H}^0_{\langle ij\rangle k} \right) 
		= \frac{1}{\tau_\sigma}\sigma_{\langle ij\rangle},
    \end{split}
    \label{eksig}
\end{align}
where 
\begin{equation}
    \begin{gathered}
        \hat{H}^0_{llk} = \inta  {m}C^2 C_k f \phiI \da, \\
        \hat{H}^0_{\langle ij\rangle k} = \inta  {m}C_{\langle i}C_{j\rangle} C_k f \phiI \da.
    \end{gathered}
\end{equation}
Eq.~\eqref{temperatures} with Eq.~\eqref{conserve}$_{3}$ and Eq.~\eqref{eksig}$_1$ provide an equation for the relaxation of $\epsI$:
\begin{align}
	&\frac{\partial}{\partial t} \left( 2\rho \epsI\right)+ \frac{\partial}{\partial x_k} \left(2\rho \epsI v_k - \hat{H}^0_{llk} + 2q_k\right) = 
 -\frac{2\rho}{\tau}\left(\epsI_E\left(\TI\right) -\epsI_E\left(T\right)\right).
 \label{eqei}
\end{align}
From Eqs.~\eqref{eksig} it is seen that $\epsK$ and $\sigma_{\langle ij \rangle}$ relax to the equilibrium state with relaxation times $\tau$ and $\tau_\sigma$, respectively.  The role of the relaxation times is easily found when we consider the spatially homogeneous case. In this case, Eqs.~\eqref{eksig} and Eq.~\eqref{eqei} reduce to:
\begin{align} % \label{homogeneous}
\begin{split}
    &\frac{d{\eps}^K_E\left(\TK\right)}{dt} = - \frac{1}{\tau}\left(\epsK_E\left(\TK\right) -\epsK_E\left(T\right)\right),\\
    &\frac{d{\eps}^I_E\left(\TI\right)}{dt} = - \frac{1}{\tau}\left(\epsI_E\left(\TI\right) -\epsI_E\left(T\right)\right),\\
    &\frac{{d\sigma}_{\langle ij\rangle}}{dt} = -\frac{1}{\tau_\sigma}\sigma_{\langle ij\rangle}.   
\end{split}
\end{align}

\smallskip 

\begin{theorem}\label{theo:2}
The tensor $\TT_{ij}$ compatible with the requirement \eqref{prod11} has the following form:
\begin{align}\label{TT1}
	\TT_{ij} = \frac{2}{3}\theta \epsK_E\left(T\right) \, \delta_{ij} + \left(1 - \theta\right)\left\{\nu\frac{\pp_{ij}}{\rho} + \frac{2}{3}\left(1-\nu\right)\epsK_E\left(\TK\right)\, \delta_{ij}\right\},
\end{align}
where  $\pp_{ij} = -t_{ij}$ is the pressure tensor and the two parameters $\theta$ and $\nu$ are related to $\tau$, $\tau_\sigma$ and $\tes$ by
\begin{align} \label{eq:tau-param}
	\frac{1}{\tau} = \frac{\theta}{\tes}, \qquad
    \frac{1}{\tau_\sigma} = \frac{1}{\tes}\left[1-\nu \left(1 - \theta\right)\right].
\end{align}
Since $\TT_{ij}$ is definite positive, the ranges of these parameters are restricted to $\nu \in \left[ -\frac{1}{2},\, 1\right]$ and $\theta \in \left[0,\, 1\right]$. 
\end{theorem}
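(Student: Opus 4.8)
The plan is to use that $\TT_{ij}$ is symmetric by construction (it is the covariance of the Gaussian $\GE$), hence uniquely fixed by its trace and its deviatoric part, and that the two components of the requirement \eqref{prod11} — tested against $\xi^2$ and against $\xi_{\langle i}\xi_{j\rangle}$ — are exactly what pin down these two pieces. First I would reduce both moments to peculiar-velocity moments: writing $\xi_i=C_i+v_i$ and using that $\G$ shares the density and momentum of $f$ (being in $\Sett$), every term involving $v$ drops out of $\inta m(\G-f)(\cdots)\phiI\da$, leaving $\inta m\xi^2(\G-f)\phiI\da=\inta mC^2(\G-f)\phiI\da$ and $\inta m\xi_{\langle i}\xi_{j\rangle}(\G-f)\phiI\da=\inta mC_{\langle i}C_{j\rangle}(\G-f)\phiI\da$. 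By the definition \eqref{momG11} of $\TT_{ij}$ and the pressure tensor $\pp_{ij}=\inta mC_iC_jf\phiI\da$, these equal $\rho\TT_{ll}-\pp_{ll}$ and $\rho\TT_{\langle ij\rangle}-\pp_{\langle ij\rangle}$, where $\pp_{ll}=3\PP=2\rho\epsK_E(\TK)$ by \eqref{eK-P} and $\pp_{\langle ij\rangle}=-\sigma_{\langle ij\rangle}$ by \eqref{stress}.

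Next I would read off the two identities from \eqref{prod11}. The $\xi^2$ row gives $\rho\TT_{ll}-2\rho\epsK_E(\TK)=-\tfrac{2\rho\tes}{\tau}(\epsK_E(\TK)-\epsK_E(T))$; introducing $\theta$ via $1/\tau=\theta/\tes$ yields $\TT_{ll}=2(1-\theta)\epsK_E(\TK)+2\theta\epsK_E(T)$, i.e. an isotropic part $\tfrac13\TT_{ll}\delta_{ij}=\big[\tfrac23(1-\theta)\epsK_E(\TK)+\tfrac23\theta\epsK_E(T)\big]\delta_{ij}$. The $\xi_{\langle i}\xi_{j\rangle}$ row gives $\rho\TT_{\langle ij\rangle}+\sigma_{\langle ij\rangle}=\tfrac{\tes}{\tau_\sigma}\sigma_{\langle ij\rangle}$; introducing $\nu$ via $1/\tau_\sigma=[1-\nu(1-\theta)]/\tes$ yields $\rho\TT_{\langle ij\rangle}=-\nu(1-\theta)\sigma_{\langle ij\rangle}=\nu(1-\theta)\pp_{\langle ij\rangle}$, so $\TT_{\langle ij\rangle}=\tfrac{\nu(1-\theta)}{\rho}\pp_{\langle ij\rangle}$. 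These two definitions are precisely \eqref{eq:tau-param}.

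Reassembling $\TT_{ij}=\tfrac13\TT_{ll}\delta_{ij}+\TT_{\langle ij\rangle}$ and using $\pp_{ij}=\PP\delta_{ij}+\pp_{\langle ij\rangle}$ with $\PP=\tfrac23\rho\epsK_E(\TK)$, a short rearrangement absorbs the isotropic $\epsK_E(\TK)$-term together with the deviatoric term into the single combination $(1-\theta)\big[\nu\tfrac{\pp_{ij}}{\rho}+\tfrac23(1-\nu)\epsK_E(\TK)\delta_{ij}\big]$, which is exactly \eqref{TT1}. The decomposition is unique, so \eqref{prod11} determines $\TT_{ij}$ with no freedom beyond the choice of $\theta,\nu$.

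Finally, for the admissible ranges I would invoke that $\TT_{ij}$ must be positive definite for $\GE$ to be a genuine Gaussian. Writing $\TT_{ij}=\theta\tfrac{p}{\rho}\delta_{ij}+(1-\theta)B_{ij}$ with $B_{ij}=\tfrac1\rho[\nu\pp_{ij}+(1-\nu)\PP\delta_{ij}]$, I would require $B_{ij}$ positive definite for every admissible (positive definite) $\pp_{ij}$: diagonalizing $\pp_{ij}$ with eigenvalues $p_k>0$ summing to $3\PP$, the eigenvalues of $B$ are $\tfrac1\rho[\PP+\nu(p_k-\PP)]$; letting one eigenvalue tend to $0$ (so $p_k-\PP\to-\PP$) forces $\nu\leq1$, while letting one eigenvalue dominate (so $p_k-\PP\to2\PP$) forces $\nu\geq-\tfrac12$. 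Given $B_{ij}$ positive definite, $\TT_{ij}$ is a convex combination of two positive definite tensors exactly when $\theta\in[0,1]$, which yields the stated ranges. The main obstacle is this last step: the bounds $-\tfrac12$ and $1$ are not structural but worst-case positivity constraints over all attainable anisotropic pressure tensors, so the eigenvalue extremization must be argued with care — together with the sign bookkeeping between $\sigma_{\langle ij\rangle}$ and $\pp_{\langle ij\rangle}$ that threads through the deviatoric identity.
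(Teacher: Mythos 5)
Your proposal is correct and follows essentially the same route as the paper's proof: substitute the moment definitions into \eqref{prod11}, read off the trace and deviatoric parts to define $\theta$ and $\nu$ via \eqref{eq:tau-param}, reassemble to get \eqref{TT1}, and then analyze positive definiteness through the decomposition $\TT_{ij}=\theta\frac{p}{\rho}\delta_{ij}+(1-\theta)\RR_{ij}/\rho$ with the eigenvalues $\lambda_i^{\RR}=\nu\lambda_i^{\pp}+(1-\nu)\PP$. The only cosmetic differences are that you justify explicitly the reduction of $\xi$-moments to peculiar-velocity moments (which the paper uses tacitly) and that you phrase the bounds $\nu\in\left[-\frac{1}{2},1\right]$, $\theta\in[0,1]$ as worst-case (necessary) constraints, whereas the paper states them as sufficient conditions via the rewriting $\lambda_i^{\RR}=\frac{1+2\nu}{3}\lambda_i^{\pp}+\frac{1-\nu}{3}\left(\lambda_j^{\pp}+\lambda_k^{\pp}\right)$; since the relevant eigenvalue is affine in $\lambda_i^{\pp}$, the two readings coincide.
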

\begin{proof}
By substituting Eq.~\eqref{eq:stress} with Eq.~\eqref{stress} and Eq.~\eqref{momG11} into the left-hand-side of Eq. \eqref{prod11}, we have
\begin{align}
	 \left(
	\begin{matrix}
		\displaystyle
    \rho\TT_{ll} - \pp_{ll} \\ %[7pt] \displaystyle
    \rho \TT_{\langle ij \rangle} - \pp_{\langle ij \rangle} %\\[10pt]
	\end{matrix}
	\right)
	= 
    \left(
	\begin{matrix}
		\displaystyle 
        -2 \frac{\tes}{\tau}\rho \left(\epsK_E\left(\TK\right)-\epsK_E\left(T\right)\right) \\ 
        \displaystyle 
        \frac{\tes}{\tau_\sigma} \sigma_{\langle ij\rangle} 
	\end{matrix}
	\right).
 \label{prod11r}
\end{align}
Recalling that $\pp_{ll} = 3 \PP = 2\rho \epsK_E\left(\TK\right)$ and $\pp_{\langle ij \rangle} = - \sigma_{\langle ij \rangle}$, we have
\begin{align}
    &\TT_{ll} = 2 \left(1- \frac{\tes}{\tau}\right)\epsK_E\left(\TK\right)+2\frac{\tes}{\tau}\epsK_E\left(T\right),\\
    &\TT_{\langle ij\rangle} = \frac{1}{\rho}\left(-1 + \frac{\tes}{\tau_\sigma}\right) \sigma_{\langle ij\rangle}.
\end{align}
In order to have correspondence with previous studies \cite{Andries-2000, Brull-2009, Kosuge-2019}, the parameters $\theta$ and $\nu$ are defined as
\begin{align}
\begin{split}
	\frac{\tes}{\tau} = \theta, \qquad
    \frac{\tes}{\tau_\sigma} = 1-\nu \left(1 - \theta\right),    
\end{split}
\label{eq:T2}
\end{align}
which provide Eq.~\eqref{eq:tau-param}.
With these parameters, from Eq.~\eqref{eq:T2}, $\TT_{ij} = \TT_{ll}\delta_{ij}/3 + \TT_{\langle ij\rangle}$ has the form of Eq.~\eqref{TT1}. 

Contrasting with the macroscopic-level determination of  $\TT_{ij}$, expressed in Eq.~\eqref{TT1}, the microscopic expression of $\TT_{ij}$ defined  in Eq.~\eqref{momG11} (with  $\G$ in place of $\bar{\G}$), ensures the definite positiveness of $\TT_{ij}$. This difference between the macroscopic and microscopic descriptions is recognized in the literature as the issue of {\it realizability} \cite{Hamburger-1944}. To maintain consistency in these different levels of description, the parameter ranges of $\nu$ and $\theta$ are restricted.
Let us rewrite $\TT_{ij}$ defined in Eq.~\eqref{eq:Tijp} as follows
\begin{align}
	&\TT_{ij} = \theta \frac{p}{\rho}\delta_{ij} +  \left(1 - \theta\right) \frac{\RR_{ij}}{\rho} ,
\end{align}
with
\begin{align} \label{eqR}
	\RR_{ij} =  \nu \pp_{ij} + \left(1-\nu\right)\PP \delta_{ij}.
\end{align}
We may notice that $\TT_{ij}$, $\RR_{ij}$, and $\pp_{ij}$ have diagonal form. Let $\lambda_i^\TT$, $\lambda_i^\RR$, $\lambda_i^\pp$ ($i=1,2,3$) be the eigenvalues of, respectively, $\TT_{ij}$, $\RR_{ij}$, and $\pp_{ij}$. From Eq.~\eqref{eqR}, we have
\begin{align} \label{eq:lambda_R}
	\lambda_i^\RR = \nu \lambda_i^\pp + \left(1-\nu\right)\PP . 
\end{align}
Since $\PP  = \pp_{ll}/3 = \left(\lambda_1^\pp + \lambda_2^\pp + \lambda_3^\pp\right)/3$, Eq.~\eqref{eq:lambda_R} can be rewritten as follows:
\begin{align}
	\lambda_i^\RR = \frac{1+2\nu}{3} \lambda_i^\pp + \frac{1-\nu}{3}\left(\lambda_j^\pp + \lambda_k^\pp\right), \qquad \left(i\neq j\neq k\right). 
\end{align}
Sufficient condition for $\RR_{ij}$ to be positive definite is $-\frac{1}{2} \leq \nu \leq 1$.
Similarly, we have
\begin{align}
	\lambda_i^\TT = \theta \frac{p}{\rho} + \left(1 - \theta\right)\frac{\lambda_i^\RR}{\rho},
\end{align}
from which it is seen that the sufficient condition for $\TT_{ij}$ to be positive definite is $0\leq \theta \leq 1$, in addition to $-\frac{1}{2} \leq \nu \leq 1$.
\end{proof}

\smallskip 

\begin{corollary}\label{coro}
Defining the relaxation energy of the translational mode $\eKrel$, in analogy to $\eIrel$ given in Eq.~\eqref{momG11I}, as:
\begin{align}
    \eKrel = \frac{1}{\rho} \inta \frac{m}{2}C^2\, \G \phiI \da, \label{momG11K}
\end{align}
the following relation holds
\begin{align}\label{def:TKrel}
	\eKrel = \left(1 - \theta\right) \epsK_E\left(\TK\right) + \theta \epsK_E\left(T\right),
\end{align}
and we  have
\begin{align}\label{TKIrel}
	\eps = \epsK_E\left(\TKrel\right) + \epsI_E\left(\TIrel\right),
\end{align}
where the relaxation temperature of translational mode $\TKrel$ is defined by
\begin{align}
    \TKrel = \epsKinv_E\left(\eKrel\right).
\end{align}
Similarly, $\eIrel$ satisfies
\begin{align}\label{def:TIrel}
	\eIrel = \left(1-\theta\right) \epsI_E\left(\TI\right) + \theta\, \epsI_E\left(T\right).
\end{align}
\end{corollary}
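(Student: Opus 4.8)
The plan is to establish the four assertions in sequence, exploiting the product structure $\G = \GE\GI$ together with the explicit form of $\TT_{ij}$ already obtained in Theorem \ref{theo:2}; no new machinery is needed, so the work is essentially a chain of substitutions among quantities whose meaning is fixed by the definitions.

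First I would compute $\eKrel$ directly from its definition \eqref{momG11K}. Since $\GI$ depends only on $I$ and is normalized so that $\int_0^\infty \GI\phiI\dI = 1$ (the normalization \eqref{NfII} applied with $\TIrel$ in place of $T$ in \eqref{eq:A-norm}), the integration over the internal variable decouples and gives $\eKrel = \frac{1}{\rho}\int_{\Rthree}\frac{m}{2}C^2\GE\dcc = \frac{1}{2}\TT_{ll}$, where the last step is just the definition \eqref{momG11} of $\TT_{ij}$ specialized to $\bar{\G}=\G$. I would then take the trace of the explicit expression \eqref{TT1}: using $\pp_{ll}=3\PP=2\rho\,\epsK_E\left(\TK\right)$ (read off from \eqref{eK-P}), the shear contribution drops out and one is left with $\TT_{ll}=2(1-\theta)\epsK_E\left(\TK\right)+2\theta\,\epsK_E\left(T\right)$, hence $\eKrel=(1-\theta)\epsK_E\left(\TK\right)+\theta\,\epsK_E\left(T\right)$, which is precisely \eqref{def:TKrel}. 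This determines $\TKrel=\epsKinv_E\left(\eKrel\right)$ unambiguously.

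To prove \eqref{TKIrel} I would use that $\G\in\Sett$, so $\G$ shares the first five moments of $f$; in particular the energy constraint \eqref{momG5}$_3$ yields $\inta\left(\frac{m}{2}C^2+I\right)\G\phiI\da=\rho\eps$, i.e. $\eKrel+\eIrel=\eps$. Since $\epsK_E\left(\TKrel\right)=\eKrel$ and $\epsI_E\left(\TIrel\right)=\eIrel$ by the very definitions of the relaxation temperatures (the latter being \eqref{def:TIrel2}), relation \eqref{TKIrel} follows at once.

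Finally, \eqref{def:TIrel} is pure bookkeeping between the two energy splittings. Writing $\eIrel=\eps-\eKrel$, decomposing $\eps=(1-\theta)\eps+\theta\eps$, and inserting the expression for $\eKrel$ from the second step gives $\eIrel=(1-\theta)\left[\eps-\epsK_E\left(\TK\right)\right]+\theta\left[\eps-\epsK_E\left(T\right)\right]$; the nonequilibrium decomposition $\eps=\epsK_E\left(\TK\right)+\epsI_E\left(\TI\right)$ (from \eqref{def:noneqT}) and the equilibrium decomposition $\eps=\epsK_E\left(T\right)+\epsI_E\left(T\right)$ (from \eqref{temperatures}) then turn the two bracketed terms into $\epsI_E\left(\TI\right)$ and $\epsI_E\left(T\right)$ respectively, which is exactly \eqref{def:TIrel}. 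The one point genuinely requiring care — and the closest thing to an obstacle in an otherwise mechanical argument — is to keep the equilibrium temperature $T$ (fixed by the total energy through $\eps=\eps_E\left(T\right)$) cleanly distinct from the mode temperatures $\TK$ and $\TI$ throughout this last substitution; once that distinction is tracked, the identity is immediate.
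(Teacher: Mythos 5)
Your proof is correct and takes essentially the same route as the paper's: you identify $\eKrel = \TT_{ll}/2$ and take the trace of Eq.~\eqref{TT1} to get Eq.~\eqref{def:TKrel}, use the fact that $\G$ shares the energy moment of $f$ (Eq.~\eqref{momG5}$_3$) to get Eq.~\eqref{TKIrel}, and obtain Eq.~\eqref{def:TIrel} by subtraction together with Eq.~\eqref{temperatures}. The only difference is that you make explicit some steps the paper leaves implicit (the normalization of $\GI$ when decoupling the $I$-integration, and the cancellation of the $\nu$-dependent terms in the trace via $\pp_{ll} = 2\rho\,\epsK_E\left(\TK\right)$), which is harmless.
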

\begin{proof}
From Eq.~\eqref{momG11}, we have $\eKrel = \TT_{ll}/2$. Then, by taking the trace part of Eq.~\eqref{TT1}, we have Eq.~\eqref{def:TKrel}. Since Eq.~\eqref{momG5}$_3$ is sum of Eq.~\eqref{momG11K} and Eq.~\eqref{momG11I}, we have Eq.~\eqref{TKIrel}. After subtracting Eq.~\eqref{TKIrel} from Eq.~\eqref{def:TKrel}, and taking into account Eq.~\eqref{temperatures}, we obtain Eq.~\eqref{def:TIrel}.
\end{proof}

\begin{remark}
The tensor $\TT_{ij}$ can also be expressed with the functional form of the pressure $p\left(\rho,T\right) = 2\rho \epsK_E\left(T\right)/3$,  as follows:
\begin{align}
	\rho \TT_{ij} = \theta p\delta_{ij} + \left(1 - \theta\right)\left\{\nu \pp_{ij} + \left(1 - \nu\right) \PP \delta_{ij}\right\}.
 \label{eq:Tijp}
\end{align}
Moreover, from the Corollary~\ref{coro}, it is seen that the tensor $\TT_{ij}$ can be expressed as 
\begin{align}\label{def:Tij}
	\TT_{ij} = \frac{2}{3}\epsK_E\left(\TKrel\right) \, \delta_{ij} - \nu \left(1 - \theta\right)\frac{\sigma_{\langle ij\rangle}}{\rho}.
\end{align}
\end{remark}

\smallskip

\begin{remark}
    Eq. \eqref{def:TIrel} and Eq. \eqref{def:TKrel}, given that $\theta \in \left[0,\, 1\right]$, define $\eIrel$ and $\eKrel$ as convex combinations, respectively, of $\epsI_E(\TI)$ and $\epsI_E(T)$, and of $\epsK_E(\TK)$ and $\epsK_E(T)$. 
    On the other hand, from Eq. \eqref{eq:Tijp} it is seen that $\TT_{ij}$ is a convex combination of $p\left(\rho,T\right)\delta_{ij}$ and $\nu \pp_{ij} + \left(1 - \nu\right) p(\rho, \TK)\delta_{ij}$, but the latter is not a convex combination of $\pp_{ij}$ and $p(\rho, \TK)\delta_{ij}$ since $\nu \in \left[ -\frac{1}{2},\, 1\right]$.
\end{remark}

\smallskip 

\smallskip 

\begin{remark}
    The difference between the present model and the model proposed by Kosuge et. al. \cite{Kosuge-2019} 
   is not limited to the state density $\phiI$ and the normalization function $A\left(T\right)$, but also involves the definition of $\TIrel$ and the introduction of $\TKrel$. In the previous model, $\TIrel$ is introduced as a convex combination of $T$ and $\TI$, i.e., $\TIrel = \theta T + \left(1-\theta\right) \TI$ with $\theta \in \left[0,\,1\right]$. On the other hand, here, the relaxation temperatures are defined through the energy as shown in Eq.~\eqref{TKIrel}. The two definitions of $\TIrel$ coincide in the case of polytropic gases. These definitions of the relaxation temperatures $\TKrel$ and $\TIrel$ also appear in \cite{Dauvois-2021, Mathiaud-2022}.    
\end{remark}

\smallskip

\subsection{H-Theorem and properties of the novel ES-BGK model}

Over the parameter domain $\nu \in \left[ -\frac{1}{2},\, 1\right]$ and $\theta \in \left[0,\, 1\right]$, which ensures that $\TT_{ij}$ is positive-definite, we can prove the H-Theorem:

\smallskip

\begin{theorem} \label{prop:1}
    The Boltzmann equation \eqref{eq:Boltzmann}, with the collisional term given in Eqs.~\eqref{eq:Q-ESBGK}, \eqref{def:G}, \eqref{eq:GE}, and \eqref{TT1} satisfies the H-theorem:
\begin{equation}
    \Sigma = - \frac{1}{\tes}\kB \inta \left(\G- f\right) \log f\, \phiI \da \geq 0.
\end{equation}    
\end{theorem}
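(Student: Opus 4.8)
The plan is to bound the entropy production below by the entropy gap $h^\G-h^f$ and then to show that the maximiser $\G$ carries at least as much entropy as $f$. First I would use the concavity of the logarithm in the tangent form $\log f\leq\log\G+(f-\G)/\G$; multiplying by $\G\geq0$ and integrating against $\phiI$ gives $\inta\G\log f\,\phiI\da\leq\inta\G\log\G\,\phiI\da$, since $\inta m(f-\G)\,\phiI\da=\rho-\rho=0$ by the mass constraint defining $\Sett$. Recalling that $h^g=-\kB\inta g\log g\,\phiI\da$, this yields $\inta(\G-f)\log f\,\phiI\da\leq-\tfrac1\kB\left(h^\G-h^f\right)$, whence $\Sigma\geq\tfrac1\tes\left(h^\G-h^f\right)$. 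As $\tes>0$, everything reduces to proving $h^\G\geq h^f$.

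Next I would remove the dependence on the unknown $f$ by a maximum-entropy bound. Let $\tilde f$ be the maximiser of the entropy \eqref{HG} under the constraint that its mass, momentum, pressure tensor $\pp_{ij}$ and internal energy $\epsI$ coincide with those of $f$. By the computation of Theorem \ref{theo:1}, $\tilde f$ is a Gaussian with covariance $\pp_{ij}/\rho$ times the internal exponential at temperature $\TI$, and its entropy is obtained from \eqref{hGexpression} by replacing $\TT_{ij}$ with $\pp_{ij}/\rho$ and $\TIrel$ with $\TI$. Since $f$ is admissible in this constrained problem, $h^f\leq h^{\tilde f}$, so it remains to establish the closed-form inequality $h^\G\geq h^{\tilde f}$.

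For this I would read \eqref{hGexpression} as a functional $\Phi(\mathsf M,E^I)=\kBm\rho\bigl(\tfrac12\log\det\mathsf M+g(T)+\mathrm{const}\bigr)$ of a symmetric positive-definite covariance $\mathsf M$ and the internal energy $E^I$, with $g(T)=\log A(T)+m\epsI_E(T)/(\kB T)$ and $T=\epsIinv_E(E^I)$. This $\Phi$ is jointly concave: $\log\det$ is concave on positive-definite matrices, while $E^I\mapsto g\bigl(\epsIinv_E(E^I)\bigr)$ has derivative $m/(\kB T)$, which decreases in $E^I$ because $d\epsI_E/dT>0$. Corollary \ref{coro} together with \eqref{eq:Tijp}--\eqref{eqR} exhibits the data of $\G$ as the convex combination $(\TT_{ij},\eIrel)=\theta\bigl(p\,\delta_{ij}/\rho,\ \epsI_E(T)\bigr)+(1-\theta)\bigl(\RR_{ij}/\rho,\ \epsI\bigr)$ with $\theta\in[0,1]$, whose first point is the equilibrium state, i.e. the global maximiser $\Phi_{\max}$ of $\Phi$ on the energy surface $\epsK+\epsI=\eps$. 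Concavity then gives $h^\G=\Phi(\TT,\eIrel)\geq\theta\,\Phi_{\max}+(1-\theta)\,\Phi(\RR/\rho,\epsI)$, and since $\Phi_{\max}\geq h^{\tilde f}$ it suffices to prove the purely translational inequality $\Phi(\RR/\rho,\epsI)\geq\Phi(\pp/\rho,\epsI)=h^{\tilde f}$, i.e. $\det\RR\geq\det\pp$.

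I expect this determinant estimate to be the main obstacle. Since $\RR_{ij}=\nu\pp_{ij}+(1-\nu)\PP\delta_{ij}$ with $\PP=\pp_{ll}/3$ shares the trace of $\pp_{ij}$, its eigenvalues are $\lambda_i^\RR=\nu\lambda_i^\pp+(1-\nu)\PP$, and $F(\nu)=\sum_i\log\lambda_i^\RR$ is concave in $\nu$ with $F'(0)=0$, so its maximum is at $\nu=0$. On $[0,1]$ monotonicity gives $F(\nu)\geq F(1)=\log\det\pp$; on $[-\tfrac12,0]$ it is enough to check the endpoint $F(-\tfrac12)\geq F(1)$, which via $\lambda_j^\pp+\lambda_k^\pp=3\PP-\lambda_i^\pp$ reduces to $\prod_i\tfrac12\bigl(\lambda_j^\pp+\lambda_k^\pp\bigr)\geq\prod_i\lambda_i^\pp$, an immediate consequence of AM--GM. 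This is exactly where the range $\nu\in[-\tfrac12,1]$ of Theorem \ref{theo:2} is needed; assembling the concavity chain over the whole parameter domain and combining it with $h^f\leq h^{\tilde f}$ then completes the proof.
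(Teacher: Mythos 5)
Your proof is correct, and while it shares the paper's skeleton --- the concavity bound $\Sigma\geq\frac{1}{\tes}\left(h^\G-h\right)$, the introduction of the eleven-moment maximiser (the paper's $f^{(11)}$, your $\tilde f$) with $h\leq h^{(11)}$ by MEP, and the final reduction to $h^\G\geq h^{(11)}$ --- your treatment of that final inequality is genuinely different. The paper bounds $h^{(11)}-h^\G$ by a difference of a nonequilibrium entropy function $s\left(\rho,\epsK,\epsI\right)$ using the determinant estimate \eqref{det_inequality} (whose proof in Appendix \ref{App:det} needs both the concavity-in-$\nu$/AM--GM trick and a Jensen step), and then, following \cite{Dauvois-2021}, shows that $S\left(\theta\right)=s\left(\rho,\eKrel,\eIrel\right)$ is concave with $\frac{\partial S}{\partial\theta}\left(1\right)=0$, hence increasing on $\left[0,1\right]$, so $S\left(0\right)\leq S\left(\theta\right)$. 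You instead observe that the maximised entropy \eqref{hGexpression} is a jointly concave functional $\Phi\left(\mathsf{M},E^I\right)$ of the covariance and the internal energy (concavity of $\log\det$ on positive-definite matrices, plus the fact that the derivative $m/\left(\kB T\right)$ of the internal part is decreasing in $E^I$ since $c_v^I>0$), that by \eqref{eq:Tijp} and \eqref{def:TIrel} the data $\left(\TT,\eIrel\right)$ of $\G$ is exactly the $\theta$-convex combination of the equilibrium point and $\left(\RR/\rho,\epsI\right)$, and that all the relevant points lie on the affine surface of fixed total energy, on which the equilibrium point maximises $\Phi$; this reduces everything to $\det\RR\geq\det\pp$, which you prove by the same concavity-plus-AM--GM argument as Appendix \ref{App:det} but in a cleaner, $\theta$-free form that needs no Jensen inequality. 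In effect, your single statement \emph{$\Phi$ is concave and maximised at equilibrium on the energy slice} subsumes the paper's $S\left(\theta\right)$ computation (which is exactly the concavity of $\Phi$ restricted to the $\theta$-segment), and is arguably more conceptual. The only step you leave under-justified is the assertion that the equilibrium point is the global maximiser of $\Phi$ on the energy surface; this does follow in one line from the MEP inequality \eqref{hmaxx} of Appendix \ref{App:Max} with $N=5$, since every Gaussian-type state parameterised by a point of that surface has the same first five moments as $f^{(E)}$, so it is worth making that invocation explicit.
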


\begin{proof} 
From the Boltzmann equation, by taking the moment of Eq.~\eqref{hgen}, we have the entropy balance law
\begin{align}
	\frac{\partial h}{\partial t} + \frac{\partial h_i}{\partial x_i} = \Sigma,
\end{align}
where the entropy density $h$ is defined in Eq.~\eqref{hgen}, and the entropy flux $h_i$ and production $\Sigma$ are defined as follows:
\begin{align}
\begin{split}
	&h_i = \inta \xi_i\, H\left(f\right)\, \phiI \da,\\
	&\Sigma = - \frac{1}{\tes}\kB \inta \left(\G- f\right) \log f\, \phiI \da = \frac{1}{\tes} \inta \left(\G -f\right) H'\left(f\right)\, \phiI \da. \label{entropyprod}
\end{split} 
\end{align}
Since $H\left(f\right)$ is a concave function, we have
\begin{align}\label{fgd}
	\left(\G - f\right) H'\left(f\right) \geq H\left(\G\right) - H\left(f\right).
\end{align}
From Eq.~\eqref{entropyprod}, taking into account \eqref{fgd} the following inequality holds 
\begin{align} \label{eq:Sigma11}
	\Sigma \geq \frac{1}{\tes}\inta \left(H\left(\G\right) - H\left(f\right)\right)\, \phiI \da = \frac{1}{\tes}\left(h^\G-h\right).
\end{align}
Let $h^{(11)}$ be the maximized entropy under the constraints that the first eleven moments of $f$ are $\left(\rho,\, \rho v_i,\, \rho v^2 + 2 \rho \eps,\, \rho v_i v_j + \pp_{ij}\right)$ (see also Remark \ref{rem:11}). These eleven moments correspond to the substitution of $\G$ with $f$ in Eq.~\eqref{momG5} and Eq.~\eqref{momG11}, which results in replacing the macroscopic  quantities from $\TT_{ij}$ to $\pp_{ij}/\rho$ and from $\TIrel$ to $\TI$. Similarly to the derivation of $\G$ shown in Eq.~\eqref{def:G} and Eq.~\eqref{eq:GE}, we obtain a distribution function $f^{(11)}$ that maximizes the entropy density for eleven moments:
\begin{align}
	f^{(11)} = \frac{\rho }{m\left(2\pi\right)^{3/2} \left[\det \left(\pp/\rho\right)\right]^{1/2} A\left(\TI\right)} \exp \left\{-\frac{1}{2} \left(\xi_i - v_i\right) \left(\left(\pp/\rho\right)^{-1}\right)_{ij} (\xi_j - v_j) -\frac{I}{\kB \TI}\right\},
 \label{eq:f11}
\end{align}
 and then we obtain the maximized entropy $h^{(11)}$ from Eq.~\eqref{hgen} as follows:
\begin{align}\label{h11}
	h^{(11)} = - \kBm \rho \left(\log \frac{\rho}{m\left(2\pi\right)^{3/2}\sqrt{\det \left(\pp/\rho\right)}A\left(\TI\right)} - \frac{m\epsI_E\left(\TI\right)}{\kB \TI} - \frac{3}{2}\right).
\end{align}
For any number of truncation $N$, using the MEP, the entropy $h^{(N)}$ that is maximized under the constraints that the first $N$ moments are prescribed satisfies the inequality
\begin{equation}\label{hmaxx}
    h^{(N)}\geq  h
\end{equation}
(see Appendix \ref{App:Max} for the proof), therefore in particular we have:  
\begin{equation} \label{eq:h11>h}
    h^{(11)} \geq h.    
\end{equation}
Moreover, we can prove that $h^\G \geq h^{(11)}$. First, we note that $A\left(T\right)$ given in Eq.~\eqref{eq:AT} may be expressed with the specific heat of internal mode, $c_v^I(T)=\frac{\partial \epsI_E(T)}{\partial T}$, as follows:
\begin{align} \label{eq:AT11}
	A\left(T\right) = A_0 \exp\left\{-\frac{m}{\kB}\left(\frac{\epsI_E\left(T\right)}{T} - \frac{\epsI_E\left(T_*\right)}{T_*}\right) + \frac{m}{\kB} \int_{T_*}^{T} \frac{c_v^I\left(\tau\right)}{\tau}d\tau\right\}.
\end{align}
From Eq.~\eqref{eq:AT11} and the explicit expressions of $h^{(11)}$ and $h^\G$, given respectively in Eq.~\eqref{h11} and Eq.~\eqref{hGexpression}, we have
\begin{align}
	h^{(11)} - h^\G &= - \kBm \rho \left(\log \sqrt{\frac{\det \TT}{\det \left(\pp/\rho\right)}} + \log \frac{A\left(\TIrel\right)}{A\left(\TI\right)} - \frac{m\epsI_E\left(\TI\right)}{\kB \TI} + \frac{m\epsI_E\left(\TIrel\right)}{\kB\TIrel}\right)\\
	&= \frac{1}{2}\kBm \rho \log \frac{\det\left(\pp/\rho\right)}{\det \TT} + \rho \int_{\TIrel}^{\TI}\frac{c_v^I\left(T'\right)}{T'}dT'. \label{eq:h11hG}
\end{align}
Given the following inequality:
\begin{align}\label{det_inequality}
	\frac{\det\left(\pp/\rho\right)}{\det \TT} \leq \left(\frac{\epsK_E\left(\TK\right)}{\epsK_E\left(\TKrel\right)}\right)^3,
\end{align}
which is proven in Appendix \ref{App:det}, and introducing the specific heat of the translational mode, $c_v^K = \frac{\partial \epsK_E(T)}{\partial T} = \frac{3}{2}\frac{k_B}{m}$,   Eq.~\eqref{eq:h11hG} satisfies
\begin{equation}
    \begin{split}
	h^{(11)} - h^\G \leq & 
	\frac{1}{2}\kBm \rho \log \left(\frac{\epsK_E \left(\TK\right)}{\epsK_E \left(\TKrel\right)}\right)^3 + \rho \int_{\TIrel}^{\TI}\frac{c_v^I\left(T'\right)}{T'}dT'\\
	& = \rho \int_{\TKrel}^{\TK} \frac{c_v^K}{T'}dT' + \rho \int_{\TIrel}^{\TI} \frac{c_v^I\left(T'\right)}{T'}dT'\\
	& = \rho \int_{\eKrel}^{\epsK} \frac{1}{\epsKinv_E\left(\eps^{K'}\right)}d\eps^{K'}  + \rho \int_{\eIrel}^{\epsI} \frac{1}{\epsIinv_E \left(\eps^{I'}\right)}d\eps^{I'} \\
	& = \rho s\left(\rho, \epsK, \epsI\right) - \rho s\left(\rho, \eKrel, \eIrel\right),
    \end{split}
\end{equation}
where we have adopted Eq.~\eqref{eq:epsK}, and $s\left(\rho, \epsK, \epsI\right)$ is a function which satisfies the following generalized Gibbs relation \cite{Arima-2017}:
\begin{align}
	ds\left(\rho, \epsK, \epsI\right) = \frac{1}{\epsKinv_E\left(\epsK\right)}d\epsK + \frac{1}{\epsIinv_E\left(\epsI\right)}d\epsI - \frac{k_B}{m} \frac{1}{\rho}d\rho.
\end{align}
It remains to be proven that $s\left(\rho, \epsK, \epsI\right) \leq s\left(\rho, \eKrel, \eIrel\right)$, and we follow the procedure proposed in \cite{Dauvois-2021}. To this aim, recalling that $\eKrel$ and $\eIrel$ depend on $\theta$ (see Eq.~\eqref{def:TKrel} and Eq.~\eqref{def:TIrel}), we introduce 
\begin{align}
	S\left(\theta\right) = s\left(\rho, \eKrel, \eIrel\right),
\end{align}
which satisfies
\begin{align}
	S\left(0\right) = s\left(\rho, \epsK, \epsI\right).
\end{align}
The function $S\left(\theta\right)$ is a concave function of $\theta$ because we have
\begin{equation}
\begin{split}
	\frac{\partial S}{\partial \theta}\left(\theta\right) &= \frac{\partial s\left(\rho, \eKrel, \eIrel\right)}{\partial \eKrel}\frac{\partial \eKrel}{\partial \theta}
	+ \frac{\partial s\left(\rho, \eKrel, \eIrel\right)}{\partial \eIrel}\frac{\partial \eIrel}{\partial \theta} \\
 &=\frac{1}{\TKrel}\left(\eps_E^K\left(T\right) - \eps_E^K\left(\TK\right)\right) + \frac{1}{\TIrel} \left(\eps_E^I\left(T\right) - \eps_E^I\left(\TI\right)\right),\\
\end{split}
\end{equation}
\begin{equation}
\begin{split}
	\frac{\partial^2 S}{\partial \theta^2}\left(\theta\right) = - \frac{1}{c_v^K {\TKrel}^2}\left(\eps_E^K\left(T\right)-\eps_E^K\left(\TK\right)\right)^2 
	- \frac{1}{c_v^I\left(\TIrel\right) {\TIrel}^2}\left(\epsI\left(T\right)-\epsI\left(\TI\right)\right)^2 \leq 0,
 \end{split}
\end{equation}
where in the last inequality, we have used $c_v^K  \geq 0$ and $c_v^I(\TIrel) \geq 0$. Moreover, since  $\TKrel=T$ and $\TIrel=T$ when $\theta=1$, we have $\frac{\partial S}{\partial \theta}\left(1\right) = 0$ from Eq.~\eqref{temperatures}. Therefore, $S$ is an increasing function of $\theta$ on the interval $\left[0,\,1\right]$, and the following relation holds
\begin{align}
	S\left(0\right) \leq S\left(\theta\right).
\end{align}
Since $s\left(\rho, \epsK, \epsI\right) \leq s\left(\rho, \eKrel, \eIrel\right)$ is proven, we conclude that
\begin{align} \label{eq:hg>h11}
	h^\G \geq h^{(11)}.
\end{align}
Combining Eq.~\eqref{eq:hg>h11} with Eq.~\eqref{eq:h11>h}, we conclude that
\begin{align}
	h^\G \geq h,
\end{align}
and therefore, from Eq.~\eqref{eq:Sigma11}, it is proven that $\Sigma \geq 0$.
\end{proof}

Dividing the range of $\theta$ into $\theta \in \left(0,\, 1\right]$ and $\theta=0$, we have the following propositions.

\smallskip

\begin{proposition}\label{prop:2}
    When $\theta \in \left(0,\, 1\right]$, the distribution functions $f$ and $\G$ reduce to $f^{(E)}$ at the equilibrium (see Eq.~\eqref{Poly:q(T)}) where $Q\left(f\right) = 0$.
\end{proposition}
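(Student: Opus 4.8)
The plan is to exploit the defining property of equilibrium, namely that $Q(f)=0$ in Eq.~\eqref{eq:Q-ESBGK} is equivalent to $\G=f$, and to turn this fixed-point identity into moment equalities that pin down the nonequilibrium fields. Since $\G$ is constructed in Theorem~\ref{theo:1} so that its eleven moments \eqref{FG} are exactly $(\rho,\rho v_i,\rho\TT_{ij}+\rho v_iv_j,2\rho\eIrel)$, while the same eleven moments evaluated on $f$ read $(\rho,\rho v_i,\pp_{ij}+\rho v_iv_j,2\rho\epsI)$, the identity $\G=f$ forces
\[
\rho\TT_{ij}=\pp_{ij},\qquad \eIrel=\epsI.
\]
These two relations, combined with the macroscopic expressions \eqref{TT1}, \eqref{eq:Tijp} for $\TT_{ij}$ and \eqref{def:TIrel} for $\eIrel$, are what I would use to show that $f$ carries no nonequilibrium content.

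First I would treat the internal mode. Combining $\eIrel=\epsI=\epsI_E(\TI)$ (the definition of $\TI$) with the convex-combination formula \eqref{def:TIrel}, $\eIrel=(1-\theta)\epsI_E(\TI)+\theta\,\epsI_E(T)$, the $\epsI_E(\TI)$ contributions partially cancel and leave $\theta\bigl(\epsI_E(\TI)-\epsI_E(T)\bigr)=0$. Here the assumption $\theta\in(0,1]$, i.e. $\theta\neq0$, is essential: it permits division by $\theta$, giving $\epsI_E(\TI)=\epsI_E(T)$, and since $\epsI_E$ is strictly increasing ($c_v^I>0$) this yields $\TI=T$ and hence $\TIrel=\epsIinv_E(\eIrel)=T$.

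Next I would treat the translational mode from $\rho\TT_{ij}=\pp_{ij}$ using Eq.~\eqref{eq:Tijp}. Taking the trace and recalling $\pp_{ll}=3\PP$, the $\nu$-dependent terms recombine and the identity collapses to $\theta\PP=\theta p$; again $\theta\neq0$ gives $\PP=p$, i.e. $\TK=T$ and zero dynamic pressure. Taking instead the deviatoric part leaves $\bigl[1-(1-\theta)\nu\bigr]\pp_{\langle ij\rangle}=0$. The key check — and the step I expect to be the main obstacle — is that the scalar factor $1-(1-\theta)\nu$ cannot vanish on the admissible domain: for $\theta\in(0,1]$ and $\nu\in[-\tfrac12,1]$ one has $(1-\theta)\nu<1$, so $1-(1-\theta)\nu>0$ and therefore $\sigma_{\langle ij\rangle}=-\pp_{\langle ij\rangle}=0$.

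Finally I would substitute the results back. With $\TK=T$ and $\sigma_{\langle ij\rangle}=0$ one has $\rho\TT_{ij}=\pp_{ij}=p\,\delta_{ij}$, hence $\TT_{ij}=(p/\rho)\delta_{ij}=\kBm T\,\delta_{ij}$, which reduces the anisotropic Gaussian $\GE$ in Eq.~\eqref{eq:GE} to the Maxwellian $f^{(M)}$; and with $\TIrel=T$ the factor $\GI$ reduces to $f^{(I)}$. Thus $\G=\GE\GI=f^{(M)}f^{(I)}=f^{(E)}$, and since $f=\G$ at equilibrium, both $f$ and $\G$ coincide with $f^{(E)}$, as claimed. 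The only delicate points are the two non-vanishing-coefficient arguments (which is precisely why the borderline case $\theta=0$ must be treated separately, as in the next proposition) and the strict monotonicity of $\epsI_E$ and $\epsK_E$ used to pass from energy equalities to temperature equalities.
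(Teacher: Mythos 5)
Your proof is correct and follows essentially the same route as the paper's: equilibrium gives $f=\G$, the moment equalities force $\rho\TT_{ij}=\pp_{ij}$ and $\eIrel=\epsI$, and the trace/deviatoric split of Eq.~\eqref{eq:Tijp} together with $\theta\neq 0$ yields $\TK=\TI=T$ and $\sigma_{\langle ij\rangle}=0$, whence $\G=f^{(E)}$ and so $f=f^{(E)}$; your explicit check that $1-\left(1-\theta\right)\nu>0$ on $\nu\in\left[-\tfrac{1}{2},\,1\right]$, $\theta\in\left(0,\,1\right]$ is in fact more careful than the paper's, which asserts $\sigma_{\langle ij\rangle}=0$ without spelling out the sign of that coefficient. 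The only element the paper includes that you omit is the converse verification (assuming $f=f^{(E)}$ and deducing $Q=0$), but the statement as worded only asserts the forward direction, so this is not a gap.
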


\smallskip

\begin{proposition}\label{prop:3}
    When $\theta = 0$, the distribution functions $f$ and $\G$ reduce to $f^{(6)}$, which is defined by
    \begin{equation} \label{eq:f6}
    	f^{(6)} =  \frac{\rho}{m \, A\left(\TI\right) } \left( \frac{m}{2 \pi \kB \TK} \right)^{3/2} 
    	\exp \left( -\frac{m C^2}{2\kB \TK} - \frac{I}{\kB \TI} \right) ,
    \end{equation}
    at the equilibrium where $Q\left(f\right)=0$.
\end{proposition}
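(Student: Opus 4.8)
The plan is to use that, by the form of the collision operator in Eq.~\eqref{eq:Q-ESBGK}, the equilibrium condition $Q(f)=0$ is equivalent to $f=\G$; hence both distributions coincide with the attractor $\G=\GE\GI$ of Theorem~\ref{theo:1}, and it suffices to evaluate this attractor when $\theta=0$ and show that self-consistency collapses it onto $f^{(6)}$. I would first specialize the relaxation quantities: setting $\theta=0$ in Corollary~\ref{coro}, Eq.~\eqref{def:TKrel} and Eq.~\eqref{def:TIrel} give $\eKrel=\epsK_E(\TK)$ and $\eIrel=\epsI_E(\TI)$, so that $\TKrel=\TK$ and $\TIrel=\TI$ by the invertibility of the caloric laws used in Eq.~\eqref{def:TIrel2}. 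In particular the internal factor already matches that of $f^{(6)}$, namely $\GI=A(\TI)^{-1}\exp(-I/(\kB\TI))$.

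The core of the argument is to show that the translational tensor $\TT_{ij}$ becomes isotropic. Because $f=\G$, the pressure tensor $\pp_{ij}=\inta mC_iC_j f\,\phiI\da$ of $f$ equals the $mC_iC_j$-moment of $\G$, which by Eq.~\eqref{momG11} is $\rho\TT_{ij}$. Inserting the explicit form Eq.~\eqref{eq:Tijp} at $\theta=0$, namely $\rho\TT_{ij}=\nu\pp_{ij}+(1-\nu)\PP\delta_{ij}$, and taking the deviatoric part gives $\pp_{\langle ij\rangle}=\nu\pp_{\langle ij\rangle}$, i.e. $(1-\nu)\sigma_{\langle ij\rangle}=0$ after using $\pp_{\langle ij\rangle}=-\sigma_{\langle ij\rangle}$. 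Provided the shear mode actually relaxes — equivalently $\tau_\sigma<\infty$, which by Eq.~\eqref{eq:tau-param} excludes the degenerate value $\nu=1$ — this forces $\sigma_{\langle ij\rangle}=0$, whence $\TT_{ij}=\tfrac{2}{3}\epsK_E(\TK)\delta_{ij}=\kBm\TK\,\delta_{ij}$ by Eq.~\eqref{def:Tij}.

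With $\TT_{ij}$ isotropic one has $(\det\TT)^{1/2}=\left(\kBm\TK\right)^{3/2}$ and $(\TT^{-1})_{ij}=(\kBm\TK)^{-1}\delta_{ij}$; substituting these into $\GE$ of Eq.~\eqref{eq:GE} and simplifying $(2\pi\kBm\TK)^{-3/2}=\left(m/(2\pi\kB\TK)\right)^{3/2}$ and $C^2/(2\kBm\TK)=mC^2/(2\kB\TK)$ yields the translational Maxwellian at temperature $\TK$. Multiplying by $\GI$ reproduces $f^{(6)}$ of Eq.~\eqref{eq:f6}, and since $f=\G$ the same expression holds for $f$.

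The step demanding most care is the vanishing of the shear stress, which rests on the realizability identity $\pp_{ij}=\rho\TT_{ij}$ together with the exclusion of the non-relaxing limit $\nu=1$. It is also worth emphasizing, in contrast with Proposition~\ref{prop:2}, that at $\theta=0$ the energy-exchange time $\tau=\tes/\theta$ diverges: the analogous self-consistency relations $\theta(\epsK_E(\TK)-\epsK_E(T))=0$ and $\theta(\epsI_E(\TI)-\epsI_E(T))=0$, obtained by matching $\epsK=\eKrel$ and $\epsI=\eIrel$, become vacuous and do not force $\TK=\TI=T$. Hence the equilibrium here is the genuinely two-temperature family $f^{(6)}$, with the partition of energy between the modes frozen and constrained only by $\eps=\epsK_E(\TK)+\epsI_E(\TI)$.
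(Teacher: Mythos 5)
Your proof is correct, and for the direction actually claimed it runs on the same mechanism as the paper's: $\theta=0$ gives $\TKrel=\TK$ and $\TIrel=\TI$ from Eq.~\eqref{def:TKrel} and Eq.~\eqref{def:TIrel}; equilibrium gives $f=\G$, hence $\rho\TT_{ij}=\pp_{ij}$; the deviatoric part of Eq.~\eqref{eq:Tijp} kills the shear stress; and the now-isotropic $\TT_{ij}$ collapses $\G$ onto $f^{(6)}$. Three differences are worth recording. First, where you identify $f^{(6)}$ by direct substitution of $\TT_{ij}=\kBm\TK\delta_{ij}$ into $\GE\GI$, the paper first observes that at $\theta=0$ the quantities $m\xi^2$ and $I$ become \emph{separate} collision invariants (because the $mC^2$- and $I$-moments of $\G$ and $f$ coincide identically, not only at equilibrium) and then obtains $f^{(6)}$ as the MEP distribution associated with the corresponding six moments, tying the result to the 6-moment closure of Remark~\ref{rem:6}; your route is more elementary and equally valid, though it forgoes that conceptual link. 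Second, you explicitly exclude the degenerate value $\nu=1$ (where $\tau_\sigma=\infty$ by Eq.~\eqref{eq:tau-param}) before dividing by $1-\nu$; the paper tacitly assumes this when it concludes $\pp_{ij}=\PP\delta_{ij}$ from Eq.~\eqref{eq:Tijp}, so your treatment is slightly more careful on this point. Third, the paper also proves the converse: assuming $f=f^{(6)}$, it notes $\sigma_{\langle ij\rangle}=0$, hence $\TT_{ij}=\kBm\TK\delta_{ij}$, hence $\G=f^{(6)}=f$ and $Q=0$. Your closing paragraph asserts that the equilibrium set is the whole two-temperature family $f^{(6)}$, but your argument only establishes the inclusion (equilibrium $\Rightarrow$ $f=f^{(6)}$); to claim the full characterization, as the paper does, you should append that two-line converse.
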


\begin{proof}[Proof of Proposition~{\upshape\ref{prop:2}}]
We follow the procedure outlined in \cite{Kosuge-2019}. At an equilibrium state where $Q=0$, we have $f=\G$.  
Then, from Eq. \eqref{momG11K} and Eq. \eqref{momG11}$_2$ we have
\begin{align}
    \TKrel = \TK, \qquad 
    \TIrel = \TI.
\end{align}
Moreover, since $\TT_{ij} = \pp_{ij}/\rho$ from Eq. \eqref{eq:stress} and Eq. \eqref{momG11},  Eq. \eqref{eq:Tijp} provides
\begin{align}
    \left(1 - \nu + \theta \nu\right) \pp_{ij} = \left\{ \theta p + \left(1 - \theta\right) \left(1 - \nu\right) \PP \right\} \delta_{ij}.
\end{align}
This indicates that $\sigma_{\langle ij\rangle} = -\rho \TT_{\langle ij\rangle} = 0$ and that $\theta \left(\TK - T \right) = 0$. The latter relation, being $\theta\neq 0$, gives $\TK = T$. Similarly, from Eq.~\eqref{def:TIrel}, we obtain $\TI = T$. Therefore, the following relation holds
\begin{align} \label{Teq}
	\TK = \TI = \TKrel = \TIrel = T,
\end{align}
and $\TT_{ij}=\kBm T\delta_{ij}$. From Eq. \eqref{def:G}, $\G=f^{(E)}$ and then $f=f^{(E)}$. 

Inversely, assuming $f = f^{(E)}$, Eq. \eqref{def:noneqT} with Eq. \eqref{Poly:EnergyDefEK} and Eq. \eqref{Poly:EnergyDefEI} gives $\TK = \TI =T$. 
Then, Eq.~\eqref{def:TKrel} and Eq.~\eqref{def:TIrel} provide Eq.~\eqref{Teq}. Recalling $\sigma_{\langle ij\rangle} = 0$ in this case, from Eq.~\eqref{eq:Tijp} we have $\TT_{ij} = p / \rho \delta_{ij}$. Therefore, $\G = f^{(E)}$, and then $f = \G$, which provides $Q=0$.
\end{proof}

\begin{proof}[Proof of Proposition~{\upshape\ref{prop:3}}]
Since $\theta = 0$, we have $\TK = \TKrel$ and $\TI = \TIrel$ from Eq.~\eqref{def:TKrel} and Eq.~\eqref{def:TIrel}, respectively. Therefore, we have
\begin{align}
    \inta m \xi^2 \left(\G-f\right) \phiI \da = 0, \qquad 
    \inta I \left(\G-f\right) \phiI \da = 0, 
\end{align}
which indicate that the collisional invariants are now $\left(m,\, m\xi_i,\, m\xi^2, I\right)$ (or $\left(m,\, m\xi_i,\, m(\xi^2+2{I}/{m}),\, m \xi^2\right)$ or  $\left(m,\, m\xi_i,\, m(\xi^2+2{I}/{m}),\, I\right)$). For the 6--moments $\left(\rho,\, \rho v_i,\, \rho v^2 + 2\rho \epsK,\, 2 \rho \epsI\right)$ (see Remark \ref{rem:6}) that correspond to the moments of the present collisional invariants, by exploiting MEP, we have Eq.~\eqref{eq:f6}.

In an equilibrium state, for which $Q = 0$, we have $f = \G$, which provides $\TT_{ij} = \pp_{ij}/\rho$ from Eq.~\eqref{eq:stress} and Eq.~\eqref{momG11}. Then,  Eq.~\eqref{eq:Tijp} gives $\pp_{ij} = \PP\delta_{ij}$ which results in $\sigma_{\langle ij\rangle} = 0$. Therefore, we have $\TT_{ij} = \kBm \TK \delta_{ij}$.
From Eq.~\eqref{def:G}, $\G = f^{(E)}$, and then $f=f^{(E)}$. 

Inversely, when we suppose $f=f^{(6)}$, we notice $\sigma_{\langle ij\rangle} = 0$. This results in, from Eq. \eqref{eq:Tijp}, $\TT_{ij} = \kBm \TK \delta_{ij}$. Therefore, $\G = f^{(6)}$, and then $f=\G$, which provides $Q=0$.
\end{proof}

\begin{remark}\label{rem:11}
The 11 moments of $f$, namely $\left(\rho,\, \rho v_i,\, \rho v^2 + 2 \rho \eps,\, \rho v_i v_j + \pp_{ij}\right)$, form the system of $11$ moments as specified by Eq.~\eqref{conserve} and Eqs.~\eqref{eksig}. By employing $f^{(11)}$ as presented in Eq.~\eqref{eq:f11}, we obtain  $\hat{H}^0_{ijk}=0$ and $q_i=0$.
\end{remark}

\smallskip

\begin{remark}\label{rem:6}
The system of the equations of 6 moments of $f$, that is, $\left(\rho,\, \rho v_i,\, \rho v^2 + 2 \rho \epsK,\, 2\rho \epsI\right)$, constitute Eq.~\eqref{conserve} and Eq.~\eqref{eksig}$_1$. With the use of $f^{(6)}$ given in Eq.~\eqref{eq:f6}, the constitutive functions are closed with $\hat{H}^0_{llk}=0$, $\sigma_{\langle ij\rangle} =0$ and $q_i=0$. See  \cite{Arima-2012-RET6,Arima-2016} for the closure of the present case.
\end{remark}

\subsection{Chapman-Enskog expansion} \label{sec:CE}

When the Knudsen number is small, one can formally derive the fluid-dynamic equations by means of the standard Chapman–Enskog procedure. Eq.~\eqref{eq:Boltzmann} with Eq.~\eqref{eq:Q-ESBGK} reduce, after straightforward calculations (see \cite{Andries-2000} for its details in the case of the ES-BGK model), to the Navier-Stokes-Fourier equations
\begin{align}
    \sigma_{\langle ij \rangle} = 2\mu \frac{\partial v_{\langle i}}{\partial x_{j\rangle}}, \qquad
    \Pi = - \mu_b \frac{\partial v_l}{\partial x_l}, \qquad
    q_i = -\kappa \frac{\partial T}{\partial x_i},
\end{align}
with the shear viscosity $\mu$, bulk viscosity $\mu_b$ and heat conductivity $\kappa$, given by
\begin{align}
    \mu = \frac{p\,\tes}{1-\nu + \theta \nu}, \qquad
    \mu_b = \frac{1}{\theta}\left(\frac{2}{3} - \frac{1}{\hat{c}_v}\right) p \, \tes, \qquad    
    \kappa = \kBm \left(1+\hat{c}_v\right) p \,\tes ,
    \label{eq:CE}
\end{align}
where $\hat{c}_v=m c_v/\kB$ is the dimensionless specific heat. 
These expressions of the transport coefficients are consistent with the ones obtained in \cite{Kosuge-2019}.
With these expressions, we can express the Prandtl number $Pr = c_p \mu/\kappa$, being $c_p = c_v + \kB/m$ the specific heat at constant pressure, as the function of the two parameters:
\begin{align}
    Pr = \frac{1}{1-\nu + \theta \nu}.
\end{align}
Moreover, the ratio of the viscosities is given by
\begin{align}
    \frac{\mu_b}{\mu} = \frac{1-\nu + \theta \nu}{\theta} \left(\frac{2}{3} - \frac{1}{\hat{c}_v}\right) = \frac{Pr}{\theta}\left(\frac{2}{3} - \frac{1}{\hat{c}_v}\right).
\end{align}
In this way, the transport coefficients are determined by $\theta$, $\nu$ and $\tes$ under a given value of $c_v$.
On the other hand, when the data of $c_v$, $\kappa$, $\mu$ and $\mu_b$ are available, we can evaluate the values of $\tes$, $\theta$, $\nu$. 
However, since data of $\mu_b$ are generally not available, we may set $\mu_b/\mu$ as an adjustable parameter \cite{Arima-2013-LW,Taniguchi-2014} (see also Sect.~\ref{sec:numerical}).

\section{Reduced ES-BGK model}\label{sec:reduced}

In order to reduce the computational cost of the numerical implementation of the ES-BGK model, the so-called \textit{reduced model} is usually introduced \cite{Dauvois-2021, Mathiaud-2022}. 
After defining the  marginal distribution functions $\Phi_m$ and $\Phi_I$  as follows:
\begin{align}
    \Phi_m\left(t, \xx, \cc\right) = \int_0^\infty m\,f\, \phiI dI, \qquad  
    \Phi_I\left(t, \xx, \cc\right) = \int_0^\infty I\, f\, \phiI dI,
    \label{eq:marginal}
\end{align}
and  introducing
\begin{align}
    \Psi_m\left(t, \xx, \cc\right) = \int_0^\infty m\,\G\, \phiI dI = m \G^{(K)}, \qquad 
    \Psi_I\left(t, \xx, \cc\right) = \int_0^\infty I\,\G\, \phiI dI = m\G^{(K)} \eIrel,
    \label{eq:marginalG}
\end{align}
the evolution equation of the marginal distribution functions $\mathbf{\Phi} \equiv \left(\Phi_m, \Phi_I\right)$ are obtained from the Boltzmann equation with $\mathbf{\Psi} \equiv \left(\Psi_m, \Psi_I\right)$ as follows:
\begin{align}
	\frac{\partial \boldsymbol{\Phi}}{\partial t} + \xi_i  \frac{\partial \boldsymbol{\Phi}}{\partial x_i}
	= \frac{1}{\tes} \left(\boldsymbol{\Psi} - \boldsymbol{\Phi}\right).
 \label{eq:Boltzmann-marginal}
\end{align}
The macroscopic fields are expressed as moments of $\Phi_m$ or $\Phi_I$ with respect to $\cc$ as follows:
\begin{gather}
	\rho = \int_{\R^3} \Phi_m\, d\cc, \qquad 
    \rho v_i = \int_{\R^3} \Phi_m \xi_i d\cc,\\
	\rho \epsK\left(\TK\right)  = \int_{\R^3}\frac{1}{2}\left(\xi_i-v_i\right)\left(\xi_i-v_i\right) \Phi_m d\cc, \qquad
	\rho \epsI\left(\TI\right)  = \int_{\R^3} \Phi_I d\cc, \\
	t_{ij} = - \int_{\R^3} (\xi_i-v_i)(\xi_j-v_j) \Phi_m d\cc, \\
	q_j = \int_{\R^3} \left\{\frac{1}{2}\left(\xi_i-v_i\right)\left(\xi_i-v_i\right) \Phi_m + \Phi_I \right\}\left(\xi_j-v_j\right) d\cc.
\end{gather}

\section{Study of standing planar shock waves} \label{sec:numerical}

A shock wave structure in one-space dimension is a traveling wave depending on $x_1$ and $t$ through $z=x_1-s \, t$, where $s$ is the shock velocity. As the Boltzmann equation is Galilean invariant, as usual we can consider the reference frame moving with the shock front for which $s=0$. Then, in order to investigate the structure of standing planar shock waves obtained with the novel ES-BGK model, Eq.~\eqref{eq:Boltzmann} is written in its steady one-dimensional form as follows:
\begin{equation} \label{eq:Boltzmann-1Dsteady}
	\xi_1 \frac{\partial f}{\partial x_1} = Q\left(f\right),
\end{equation}
and then suitably put in dimensionless form.

For a rarefied CO$_2$ gas, since the shear viscosity $\mu$ is well approximated by a power of the temperature \cite{Gilbarg-1953}, recalling Eq.~\eqref{eq:CE}$_1$ and following the notation in \cite{Kosuge-2019}, it is useful to write the relaxation time $\tes$ as
\begin{align}
    \tes = \frac{1}{\rho A_c\left(T\right)},
\end{align}
where the explicit expression of $A_c\left(T\right)$ as a power of $T$ will be given later (see Sect.~\ref{sec:numerical_results}).

\subsection{Dimensionless system}

Adopting the following dimensionless variables, as suggested in \cite{Kosuge-2018},
\begin{equation}
	\begin{gathered}
		\hatx = \frac{x_1}{L}, \qquad 
		\hatcx = \frac{\xi_1}{a_0}, \qquad 
		\hatvx = \frac{v_1}{a_0}, \qquad
		\hatrho = \frac{\rho}{\rho_0}, \\ 
		\hat{p} = \frac{p}{\rho_0 a_0^2/2}, \qquad
		\hatpij = \frac{\pp_{ij}}{\rho_0 a_0^2/2}, \qquad
		\hatTT_{ij} = \frac{\TT_{ij}}{\rho_0 a_0^2/2}, \qquad
        \hat\eps = \frac{\eps}{a_0^2/2}, \\
        \hatepsK=\frac{\epsK}{a_0^2/2}, \qquad
        \hatepsI=\frac{\epsI}{a_0^2/2}, \qquad
        \hateKrel=\frac{\eKrel}{a_0^2/2},\qquad
        \hateIrel=\frac{\eIrel}{a_0^2/2}, \qquad 
        \hatq_1 = \frac{q_1}{\rho_0 a_0^4/2}, \\
		\hatT = \frac{T}{T_0}, \qquad 
		\hatTK = \frac{\TK}{T_0}, \qquad 
		\hatTI = \frac{\TI}{T_0}, \qquad
        \hatTKrel = \frac{\TKrel}{T_0}, \qquad
		\hatTIrel = \frac{\TIrel}{T_0}, \\
		\hatf = \frac{m A\left(T_0\right)f}{\rho_0 a_0^{-3}}, \qquad
		\hatGE = \frac{m \GE}{\rho_0a_0^{-3}}, \qquad
		\hatGI = A\left(T_0\right)\GI, \qquad
		\hatI = \frac{I}{m a_0^2/2}, \\
        \hatAc\left(\hatT\right) = \frac{A_c\left(T\right)}{A_c\left(T_0\right)}, \qquad
        \hatA\left(\hatT\right) = \frac{A\left(T\right)}{A\left(T_0\right)}, \qquad 
        \hatphiI = \frac{m a_0^2 \phiI}{2 A\left(T_0\right)},
	\end{gathered}
\end{equation}
where $\rho_0$ and $T_0$ are reference values for, respectively, the density and temperature; $a_0 = \left(2 \kBm T_0\right)^{1/2}$, $L = 2 a_0 / \left( \pi^{1/2} \rho_0 A_c\left(T_0\right)\right)$ is the mean free path of the gas molecules in the equilibrium state with density $\rho_0$ and temperature $T_0$, Eq.~\eqref{eq:Boltzmann-1Dsteady} takes the form
\begin{equation} \label{eq:Boltzmann-1Dsteady-hat}
	\hatcx \frac{\partial \hatf}{\partial \hatx} = \frac{2}{\pi^{1/2}} \hatQ\left(\hatf\right), \qquad
    \hatQ\left(\hatf\right) = \hatAc\left(\hatT\right) \hatrho \left( \hatG - \hatf\right),
\end{equation}
where $\hatG = \hatGE \hatGI$ with
\begin{gather}
	\hatGE = \frac{\hatrho}{\pi^{3/2} \left(\det\hatTT\right)^{1/2} } \exp \left\{-\left(\hat{\xi}_i - \hatvi\right) \left(\hatTT^{-1}\right)_{ij} \left(\hat{\xi}_j - \hatvj\right) \right\}, \\
	\hatGI = \frac{1}{\hatA\left(\hatTIrel\right)} \exp\left(-\frac{\hatI}{\hatTIrel}\right),
\end{gather}
and
\begin{equation}
	\begin{gathered}
		\hatrho = \int_\Rthree \int_0^\infty \hatf\, \hatphiI \dhatI \dhatcc, \\ 
		\hatvi = \frac{1}{\hatrho} \int_\Rthree \int_0^\infty \hatci \hatf\, \hatphiI \dhatI \dhatcc, \\
		\hatTT_{ij} = \left(1-\theta\right)\, \left[\left(1-\nu\right) \hatTK \delta_{ij} + \nu \frac{\hatpij}{\hatrho}\right] + \theta \,\hatT \delta_{ij}.
	\end{gathered} 
\end{equation}
Moreover, we have
\begin{gather}
	\hateps = \hatepsK + \hatepsI, \\
	\hatepsK = \frac{1}{2 \hatrho} \iint_0^{\infty} \abs{\hatcc - \hatvv}^2 \hatf\, \hatphiI \dhatI \dhatcc, \qquad
	\hatepsI = \frac{1}{\hatrho} \iint_0^{\infty} \hatI \hatf\, \hatphiI \dhatI \dhatcc,
\end{gather}
and
\begin{equation}
	\hateps_E\left(\hatT\right) =\int_{\hatT_*}^{\hatT} \hat c_v\left(\tau\right) \,d\tau, \qquad 
	\hatT = \hateps_E^{-1}\left(\hateps\right), \qquad 
    \hat{p} = \hatrho \hatT,
\end{equation}
where %$\hat c_v = c_v/ \kBm$, and 
$\hatT_* = T_*/T_0$.
The dimensionless translational temperature $\hatTK$ is readily given by 
\begin{equation}
	\hatTK = \frac{2}{3} \hatepsK,
 \label{eq:TKhat}
\end{equation}
while the dimensionless internal temperature $\hatTI$ and the dimensionless temperature $\hatTrel$ are obtained as implicit solutions of 
\begin{equation}
	\hatepsI_E\left(\hatTI\right) = \hatepsI, \qquad
	\hatepsI_E\left(\hatTIrel\right) = \theta\, \hatepsI_E\left(\hatT\right) + \left(1-\theta\right) \hatepsI_E\left(\hatTI\right).
 \label{eq:TIhat}
\end{equation}

\subsection{Similarity solution}

Since in the following we shall be interested in studying the structure of plane shock waves traveling along the $x_1$ direction (i.e. $\hatvy = \hatvz = 0$), it is useful to introduce the similarity solution of the form
\begin{equation}
	\hatf = \hatf\left( \hatx, \hatcx, \hatcr, \hatI \right), \qquad \hatcr = \left( \hatcy^2 + \hatcz^2\right)^{1/2}.
\end{equation}
Under this assumption, the distribution function $\hatGE$ is written as
\begin{equation}
	\hatGE = \frac{\hatrho}{\pi^{3/2} \left(\hatTT_{11}\right)^{1/2} \hatTT_{22}} 
	\exp \left(-\frac{\left(\hatcx - \hatvx\right)^2}{\hatTT_{11}} - \frac{\hatcr^2}{\hatTT_{22}}\right),
 \label{eq:hatGK}
\end{equation}
where the involved macroscopic quantities are written as follows:
\begin{equation}
	\begin{gathered}
		\hatrho = 2\pi \int_{0}^{\infty} \int_{-\infty}^{+\infty} \int_{0}^{\infty} \hatcr \hatf \hatphiI \dhatI \dhatcx \dhatcr, \\
		\hatvx = \frac{2\pi}{\hatrho} \int_{0}^{\infty} \int_{-\infty}^{+\infty} \int_{0}^{\infty} \hatcx \hatcr \hatf \hatphiI \dhatI \dhatcx \dhatcr, \\
		\hatp_{11} = 4\pi \int_{0}^{\infty} \int_{-\infty}^{+\infty} \int_{0}^{\infty} \hatcr \left(\hatcx - \hatvx\right)^2 \hatf \hatphiI \dhatI \dhatcx \dhatcr, \\
		\hatp_{22} = \hatp_{33} = 2\pi \int_{0}^{\infty} \int_{-\infty}^{+\infty} \int_{0}^{\infty} \hatcr^3 \hatf \hatphiI \dhatI \dhatcx \dhatcr, \\
		\hatTT_{11} = \theta\, \hatT + \left(1 - \theta\right) \left( \left(1 - \nu\right) \hatTK + \nu \frac{\hatp_{11}}{\hatrho} \right), \\
		\hatTT_{22} = \hatTT_{33} = \theta\, \hatT + \left(1 - \theta\right) \left( \left(1 - \nu\right) \hatTK + \nu \frac{\hatp_{22}}{\hatrho} \right), \\
		%\hatTT_{12} = \hatTT_{13} = \hatTT_{23} = 0,
		\hatepsK = \frac{2\pi}{\hatrho} \int_{0}^{\infty} \int_{-\infty}^{+\infty} \int_{0}^{\infty} \hatcr \left(\left(\hatcx - \hatvx\right)^2 + \hatcr^2\right) \hatf \hatphiI \dhatI \dhatcx \dhatcr, \\
		\hatepsI = \frac{2\pi}{\hatrho} \int_{0}^{\infty} \int_{-\infty}^{+\infty} \int_{0}^{\infty} \hatcr
        \hatI \hatf \hatphiI \dhatI \dhatcx \dhatcr, \\
        \hatq_1 = 2\pi \int_{0}^{\infty} \int_{-\infty}^{+\infty} \int_{0}^{\infty} \hatcr \left(\hatcx - \hatvx\right) \left( \left(\hatcx - \hatvx\right)^2 + \hatcr^2 + \hatI\right) \hatf \hatphiI \dhatI \dhatcx \dhatcr, 
	\end{gathered}
\end{equation}
and 
\begin{equation}
    \hatp_{ij}=0, \qquad \hatTT_{ij}=0 \quad \text{for} \quad i\neq j.
\end{equation}

\subsection{Reduced ES-BGK model for similarity solution}

In the present case, it is possible to introduce the marginal distribution function from Eq.~\eqref{eq:marginal} as follows:
\begin{align}
    &\phi_1 = 2\pi \int_{0}^{\infty}  \hatcr \, \hat\Phi_m \,  \dhatcr
    = 2\pi \int_{0}^{\infty}  \int_{0}^{\infty} \hatcr \hatf \hatphiI \dhatI \dhatcr
    , \\
    &\phi_2 = 2\pi \int_{0}^{\infty}  \hatcr^3 \, \hat\Phi_m \,  \dhatcr
    = 2\pi \int_{0}^{\infty}  \int_{0}^{\infty} \hatcr^3 \hatf \hatphiI \dhatI \dhatcr
    , \\
    &\phi_3 = 2\pi \int_{0}^{\infty}  \hatcr \, \hat\Phi_I \,  \dhatcr= 
    2\pi \int_{0}^{\infty}  \int_{0}^{\infty} \hatcr \hatI \hatf \hatphiI \dhatI \dhatcr
    , 
\end{align}
where $\hat\Phi_m = \Phi_m /\left(\rho_0a_0^{-3}\right)$ and $\hat\Phi_I = 2a_0\Phi_I /\rho_0$. Moreover, similarly to Eq.~\eqref{eq:marginalG}, we introduce
\begin{align}
 \begin{split}
    &\psi_1 = 2\pi \int_{0}^{\infty}  \hatcr \, \hat\Psi_m \,  \dhatcr
    =2\pi \int_{0}^{\infty}   \hatcr \, \hatGE \,  \dhatcr, \\
    &\psi_2 = 2\pi \int_{0}^{\infty}  \hatcr^3 \, \hat\Psi_m \,  \dhatcr, 
    =2\pi \int_{0}^{\infty}   \hatcr^3 \, \hatGE \,  \dhatcr, \\
    &\psi_3 = 2\pi \int_{0}^{\infty}  \hatcr \, \hat\Psi_I \,  \dhatcr
    =2\pi \hateIrel \int_{0}^{\infty}   \hatcr \, \hatGE \,  \dhatcr,      
 \end{split}
 \label{eq:marginalGhat}
\end{align}
where $\hat\Psi_m = \Psi_m /\left(\rho_0a_0^{-3}\right)$ and $\hat\Psi_I = 2a_0\Psi_I /\rho_0$.
Recalling Eq.~\eqref{eq:hatGK} and the Gaussian integrals $\int_{0}^{\infty} z \exp\left(-\frac{z^2}{\mu}\right) \, dz = \frac{\mu}{2}$ and $\int_{0}^{\infty} z^3 \exp\left(-\frac{z^2}{\mu}\right) \, dz = \frac{\mu^2}{2}$, Eq.~\eqref{eq:marginalGhat} are written as follows:
\begin{equation}
	\begin{split}
		\psi_1 
		&= \frac{2\hatrho}{\left(\pi \hatTT_{11}\right)^{1/2} \hatTT_{22}} \exp \left(-\frac{\left(\hatcx - \hatvx\right)^2}{\hatTT_{11}} \right) \int_{0}^{\infty} \hatcr \, \exp \left( - \frac{\hatcr^2}{\hatTT_{22}}\right) \dhatcr  
		=  \frac{\hatrho}{\left(\pi \hatTT_{11}\right)^{1/2}} \exp \left(-\frac{\left(\hatcx - \hatvx\right)^2}{\hatTT_{11}} \right),\\
		\psi_2 
		&= \frac{2\hatrho}{\left(\pi \hatTT_{11}\right)^{1/2} \hatTT_{22}} \exp \left(-\frac{\left(\hatcx - \hatvx\right)^2}{\hatTT_{11}} \right) \int_{0}^{\infty} \hatcr^3 \, \exp \left( - \frac{\hatcr^2}{\hatTT_{22}}\right) \dhatcr 
		=  \frac{\hatrho \hatTT_{22}}{\left(\pi \hatTT_{11}\right)^{1/2}} \exp \left(-\frac{\left(\hatcx - \hatvx\right)^2}{\hatTT_{11}} \right),\\
		\psi_3 
		&= \frac{2\hatrho \hateIrel}{\left(\pi \hatTT_{11}\right)^{1/2} \hatTT_{22}} \exp \left(-\frac{\left(\hatcx - \hatvx\right)^2}{\hatTT_{11}} \right) \int_{0}^{\infty} \hatcr \, \exp \left( -\frac{\hatcr^2}{\hatTT_{22}}\right) \dhatcr 
		=  \frac{\hatrho \hateIrel}{\left(\pi \hatTT_{11}\right)^{1/2}} \exp \left(-\frac{\left(\hatcx - \hatvx\right)^2}{\hatTT_{11}} \right),
	\end{split}
\end{equation}
and the following system is obtained from Eq.~\eqref{eq:Boltzmann-1Dsteady-hat} (see also Eq.~\eqref{eq:Boltzmann-marginal}):
\begin{equation} \label{eq:ESBGK-reduced}
	\hatcx \frac{\partial \phi_k}{\partial \hatx} = \frac{2}{\pi^{1/2}} \hatAc\left(\hatT\right) \hatrho \left( \psi_k - \phi_k\right), \qquad k = 1, 2, 3.
\end{equation}
It is also noted that the macroscopic quantities $\hatrho$, $\hatvx$, $\hatp_{11}$, and $\hatp_{22}$ involved in Eq.~\eqref{eq:ESBGK-reduced} may be written in terms of the marginal functions $\phi_1$, $\phi_2$, and $\phi_3$ as follows:
\begin{equation}
	\begin{gathered}
		\hatrho = \int_{-\infty}^{+\infty} \phi_1 \dhatcx, \qquad 
		\hatvx = \frac{1}{\hatrho} \int_{-\infty}^{+\infty} \hatcx \phi_1 \dhatcx, \\
		\hatp_{11} = 2 \int_{-\infty}^{+\infty} \left(\hatcx - \hatvx\right)^2 \phi_1 \dhatcx, \qquad
		\hatp_{22} = \int_{-\infty}^{+\infty} \phi_2 \dhatcx,
	\end{gathered}
\end{equation}
while the translational internal energy, $\hatepsK$, the internal energy associated to internal modes, $\hatepsI$, and the heat flux $\hatq_1$ are given by
\begin{equation}
	\hatepsK = \frac{1}{\hatrho} \int_{-\infty}^{+\infty} \left( \left(\hatcx - \hatvx\right)^2 \phi_1 + \phi_2 \right) \dhatcx, \qquad 
	\hatepsI = \frac{1}{\hatrho} \int_{-\infty}^{+\infty} \phi_3 \dhatcx.
\end{equation}
and
\begin{equation}
	\hatq_1 = \int_{-\infty}^{\infty} \left(\hatcx - \hatvx\right) \left( \left( \hatcx - \hatvx\right)^2 \phi_1 + \phi_2 + \phi_3\right) \dhatcx.
\end{equation}
The translational and internal temperatures, $\hatTK$ and $\hatTI$, are directly obtained from $\hatepsK$ and $\hatepsI$ from Eq.~\eqref{eq:TKhat} and Eq.~\eqref{eq:TIhat}, respectively.

\subsection{Numerical results \label{sec:numerical_results}}

In order to obtain the structure of planar shock waves for various values of the Mach number $M_0$, the system of integro-differential equations given in Eq.~\eqref{eq:ESBGK-reduced} is numerically solved on a one-dimensional finite computational domain.

Provided the quantities $\rho_0$, $v_{1,0}$, and $T_0$ representing, respectively, the density, $x_1$-component of the velocity, and temperature in the unperturbed equilibrium state ($x_1 \to -\infty$), the corresponding density, $\rho_1$, $x_1$-component of the velocity, $v_{1,1}$, and temperature, $T_1$, in the perturbed equilibrium state ($x_1 \to +\infty$) are obtained as a one-parameter solution of the Rankine-Hugoniot equations, being the Mach number $M_0$ the parameter.

In terms of dimensionless variables, the equilibrium distribution function in the unperturbed state ($x_1 \to -\infty$), $\hatf_0$, and in the perturbed state ($x_1 \to +\infty$), $\hatf_1$, are, respectively,
\begin{equation}
	\hatf_0 = \frac{\hatrho_0}{\left(\pi \hatT_0\right)^{3/2}} \exp\left( -\frac{\left(\hatcx - \hat v_{1,0}\right)^2 + \hatcr^2}{\hatT_0} \right) \frac{1}{\hatA\left(\hatT_0\right)} \exp\left( -\frac{\hatI}{\hatT_0} \right),
\end{equation}
and
\begin{equation}
	\hatf_1 = \frac{\hatrho_1}{\left(\pi \hatT_1\right)^{3/2}} \exp\left( -\frac{\left(\hatcx - \hat v_{1,1}\right)^2 + \hatcr^2 %+ \hatcy^2 + \hatcz^2
 }{\hatT_1} \right) \frac{1}{\hatA\left(\hatT_1\right)} \exp\left( -\frac{\hatI}{\hatT_1} \right),
\end{equation}
where $\hatrho_0 = \hatT_0 = 1$ due to the choice of the quantities $\rho_0$ and $T_0$ as reference values, respectively, for the density and the temperature in the definition of the dimensionless variables.

From the above expression of $\hatf_0$ and $\hatf_1$, the corresponding marginal distribution functions $\phi_{1,0}$, $\phi_{2,0}$, and $\phi_{3,0}$ in the unperturbed equilibrium state, and $\phi_{1,1}$, $\phi_{2,1}$, and $\phi_{3,1}$ in the perturbed equilibrium state are obtained ($i=0,1$):
\begin{equation} \label{eq:phi-bc}
	\begin{split}
		\phi_{1,i} &= \frac{\hatrho_i}{\left(\pi \hatT_i\right)^{1/2}} \exp\left( -\frac{\left(\hatcx - \hat v_{1,i}\right)^2}{\hatT_i} \right), \\
		\phi_{2,i} &= \frac{\hatrho_i \hatT_i}{\left(\pi \hatT_i\right)^{1/2}} \exp\left( -\frac{\left(\hatcx - \hat v_{1,i}\right)^2}{\hatT_i} \right), \\
		\phi_{3,i} &= \frac{\hatrho_i \hatepsI_i}{\left(\pi \hatT_i\right)^{1/2}} \exp\left( -\frac{\left(\hatcx - \hat v_{1,i}\right)^2}{\hatT_i} \right).
	\end{split}
\end{equation}
The previous expressions in Eq.~\eqref{eq:phi-bc} are used as boundary conditions in the process of numerically solving the system of equations outlined in Eq.~\eqref{eq:ESBGK-reduced}.

\def\scale{0.73}

\begin{figure}
	\centering
	\includegraphics[scale=\scale]{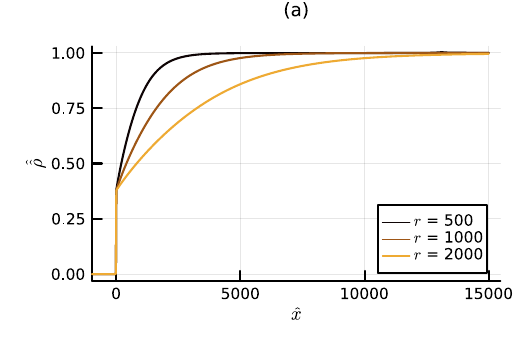}
	\includegraphics[scale=\scale]{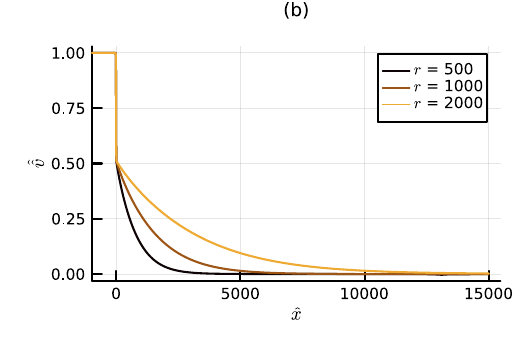} \\
    \includegraphics[scale=\scale]{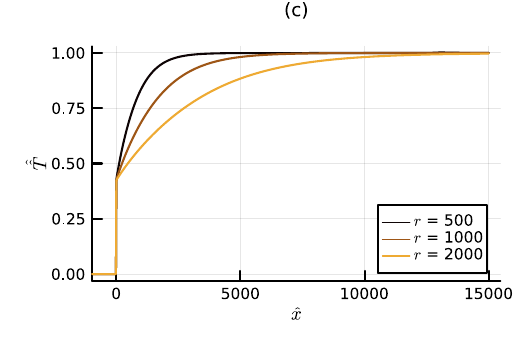}
	\includegraphics[scale=\scale]{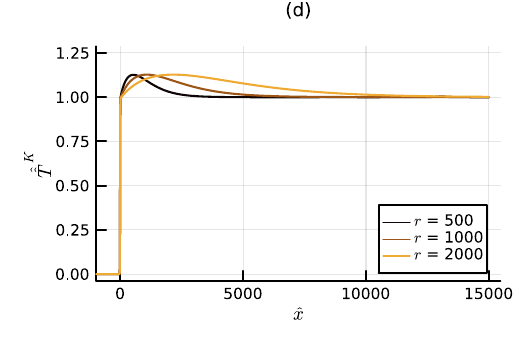} \\ 
    \includegraphics[scale=\scale]{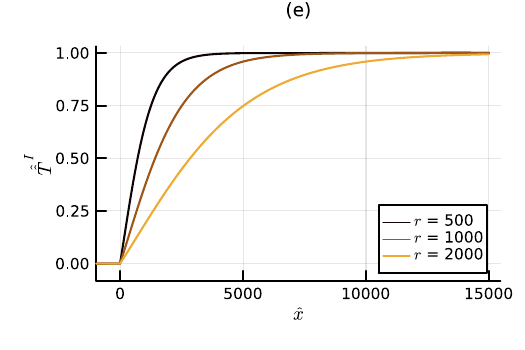}
	\includegraphics[scale=\scale]{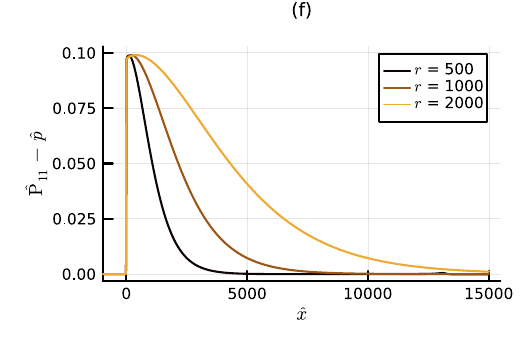} \\ 
    \includegraphics[scale=\scale]{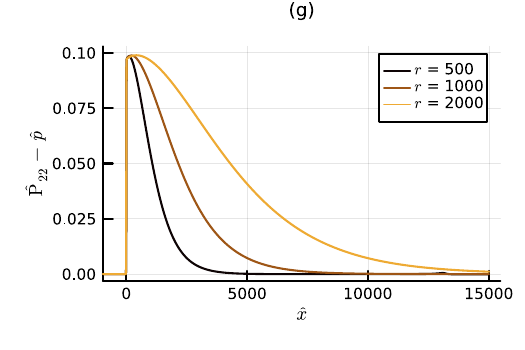}
	\includegraphics[scale=\scale]{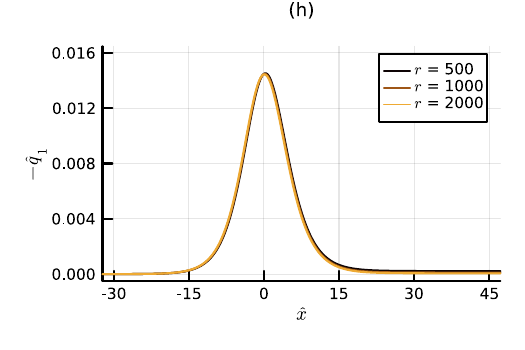} \\
	\caption{Profiles of the normalized density $\normrho$ (a); normalized velocity $\normv$ (b); normalized temperature $\normT$ (c); normalized translational temperature $\normTtr$ (d); normalized internal temperature $\normTint$ (e); dimensionless pressure difference $\hat{\pp}_{11} - \hat{p}$ (f); dimensionless pressure difference $\hat{\pp}_{22} - \hat{p}$ (g); dimensionless heat flux $-\hatq$ (h) for a plane shock wave corresponding to $M_0 = 1.3$, for three different values of the parameter $r = \left(\mu_b/\mu\right)_{T=T_0}$. \label{fig:M1.3} }
\end{figure}

\begin{figure}
	\centering
	\includegraphics[scale=\scale]{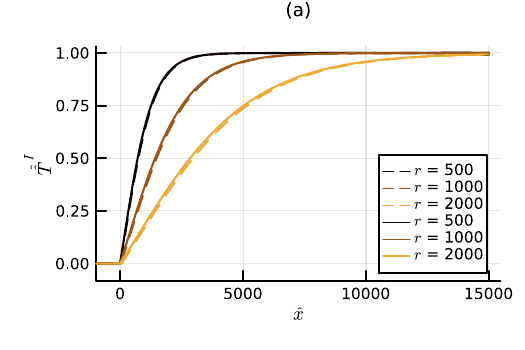}
	\includegraphics[scale=\scale]{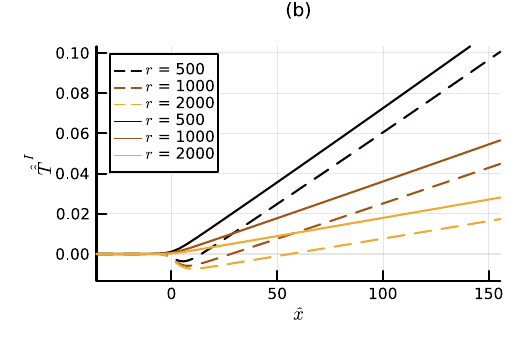} \\
	\caption{(a) Profiles of the normalized internal temperature, $\normTint$, for a plane shock wave corresponding to $M_0 = 1.3$, for three different values of the parameter $r = \left(\mu_b/\mu\right)_{T=T_0}$; (b) zoom of the profiles of the normalized internal temperature near the foot of the shock. Results obtained with the model presented in \cite{Kosuge-2019} are represented by dash lines; results obtained with the model presented in Sect.~\ref{sec:newmodel} are represented by solid lines. Panel (b) shows a zoom of the region \label{fig:M1.3-Tint} }
\end{figure}

\begin{figure}
	\centering
	\includegraphics[scale=\scale]{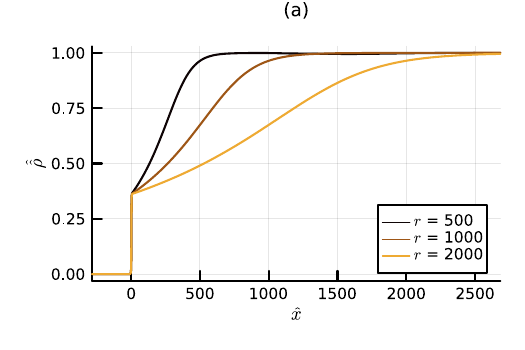}
	\includegraphics[scale=\scale]{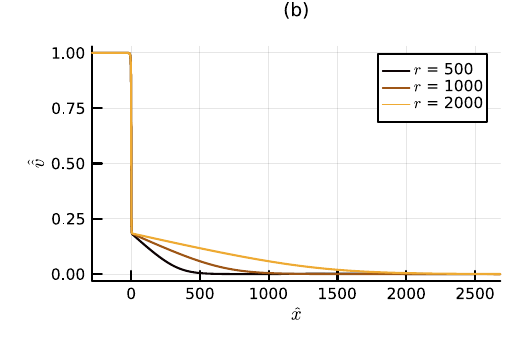} \\
    \includegraphics[scale=\scale]{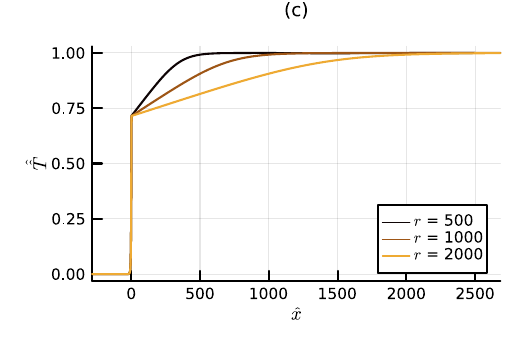}
	\includegraphics[scale=\scale]{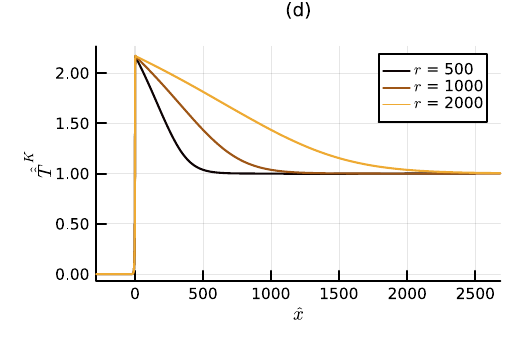} \\
    \includegraphics[scale=\scale]{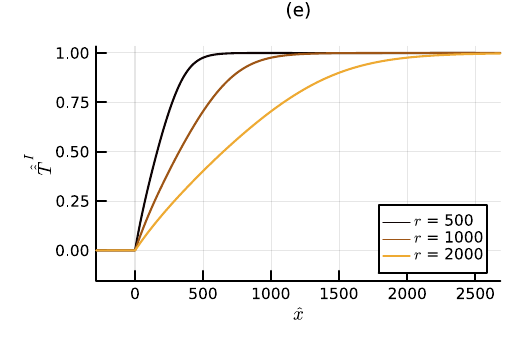}
	\includegraphics[scale=\scale]{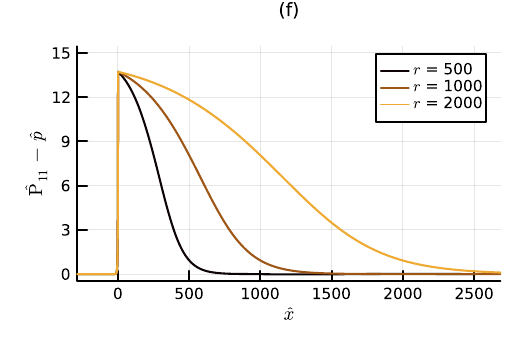} \\
    \includegraphics[scale=\scale]{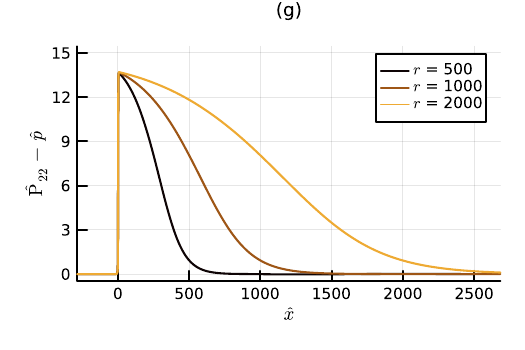}
	\includegraphics[scale=\scale]{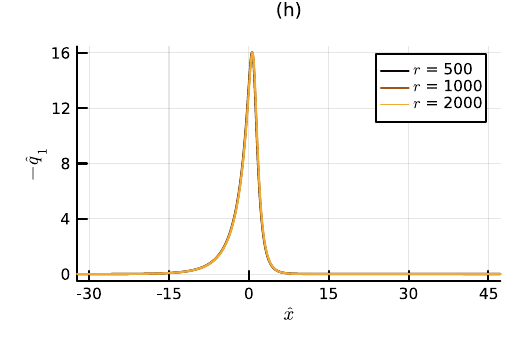} \\
	\caption{Profiles of the normalized density $\normrho$ (a); normalized velocity $\normv$ (b); normalized temperature $\normT$ (c); normalized translational temperature $\normTtr$ (d); normalized internal temperature $\normTint$ (e); dimensionless pressure difference $\hat{\pp}_{11} - \hat{p}$ (f); dimensionless pressure difference $\hat{\pp}_{22} - \hat{p}$ (g); dimensionless heat flux $-\hatq$ (h) for a plane shock wave corresponding to $M_0 = 5$, for three different values of the parameter $r = \left(\mu_b/\mu\right)_{T=T_0}$. \label{fig:M5} }
\end{figure}

\begin{figure}
	\centering
	\includegraphics[scale=\scale]{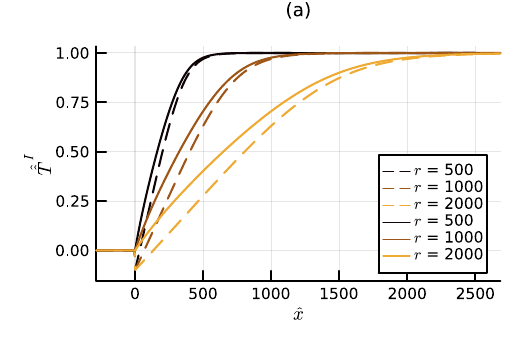}
	\includegraphics[scale=\scale]{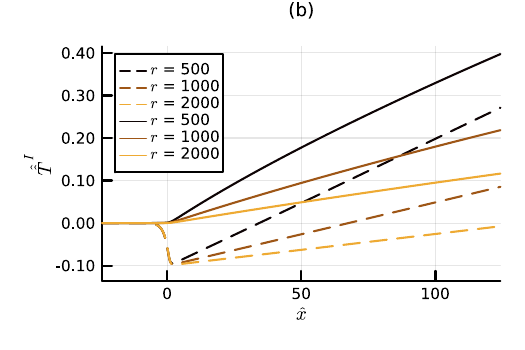} \\
	\caption{(a) Profiles of the normalized internal temperature, $\normTint$, for a plane shock wave corresponding to $M_0 = 5$, for three different values of the parameter $r = \left(\mu_b/\mu\right)_{T=T_0}$; (b) zoom of the profiles of the normalized internal temperature near the foot of the shock. Results obtained with the model presented in \cite{Kosuge-2019} are represented by dash lines; results obtained with the model presented in Sect.~\ref{sec:newmodel} are represented by solid lines.\label{fig:M5-Tint} }
\end{figure}

\smallskip 

In order to compare the results obtained by means of the model proposed by Kosuge et al. \cite{Kosuge-2019} to the model proposed here, calculations have been carried out adopting the same model parameters as those used in \cite{Kosuge-2019}, which in turn used model parameters discussed in \cite{Taniguchi-2014, Taniguchi-2016}. A carbon dioxide (CO$_2$) gas is considered, for which the temperature dependence of the specific heat, $\hat c_v = c_v / \left(\kB / m\right)$, may be approximated at around room temperature as follows \cite{Taniguchi-2014}:
\begin{equation} \label{eq:cv-poly}
	\hat c_v \left(T\right) = 1.412 + 8.697 \times 10^{-3} T - 6.575 \times 10^{-6} T^2 + 1.987 \times 10^{-9} T^3.
\end{equation}
The temperature dependence of the shear viscosity is approximated as $\mu \propto T^{0.935}$ \cite{Taniguchi-2014}. Therefore, from Eq.~\eqref{eq:CE}$_1$, we set $A_c\left(T\right)\propto T^{0.065}$, i.e., $\hat{A}_c\left(\hat{T}\right) = \hatT^{0.065}$. Following \cite{Kosuge-2019}, the values of $\nu$ and $\theta$ are suitably chosen as to match a value of the Prandtl number equal to 0.73 and a ratio $r$ of the bulk viscosity, $\mu_b$, and the viscosity, $\mu$, in the unperturbed equilibrium state varying in the range from 500 to 2000, as shown in Table~\ref{tab1}.
\begin{center}
\begin{table}
    \begin{tabular}{|c|c|c|} 
        \hline
        $r = \left( \mu_b/\mu \right)_{T=T_0} $ & $\nu$ & $\theta$ \\
        \hline
         $500$ & $-0.3702$ & $1.034 \times 10^{-3}$ \\ 
        $1000$ & $-0.3701$ & $5.169 \times 10^{-4}$ \\
        $2000$ & $-0.37$   & $2.585 \times 10^{-4}$ \\
        \hline
    \end{tabular}
    \caption{Values of $\nu$ and $\theta$ for $Pr = 0.73$ and $r = \left( \mu_b/\mu \right)_{T=T_0} = 500, 1000, 2000$. \label{tab1} }
\end{table}
\end{center}

\smallskip 

In order to facilitate the comparison of the results obtained by means of the two models, the profiles of the density, velocity, temperature, translational temperature and internal temperature in a planar shock wave, shown in Fig.~\ref{fig:M1.3} and Fig.~\ref{fig:M1.3-Tint} for $M_0 = 1.3$, and Fig.~\ref{fig:M5} and Fig.~\ref{fig:M5-Tint}  for $M_0 = 5$, are normalized, following \cite{Kosuge-2019}, as follows:
\begin{equation}
    \normrho = \frac{\hat{\rho} - \hat{\rho}_0}{\hat{\rho}_1 - \hat{\rho}_0}, \quad
    \normv = \frac{\hat{v}_1 - \hat{v}_{1,1}}{\hat{v}_{1,0} - \hat{v}_{1,1}}, \quad
    \normT = \frac{\hat{T} - \hat{T}_0}{\hat{T}_1 - \hat{T}_0}, \quad
    \normTtr = \frac{\hat{T}^K - \hat{T}_0}{\hat{T}_1 - \hat{T}_0}, \quad
    \normTint = \frac{\hat{T}^I - \hat{T}_0}{\hat{T}_1 - \hat{T}_0}.
\end{equation}

Despite the relevant differences in the model proposed in \cite{Kosuge-2019} and the novel model presented here, the numerical results obtained by means of the two models are very similar except for a remarkable difference in the profile of the internal temperature, $\TI$. Being the profiles of all the other macroscopic quantities very similar to those already published in \cite{Kosuge-2019}, the comparison is not reported here, and only the profiles obtained with the novel model presented in Sect.~\ref{sec:newmodel} are shown in Fig.~\ref{fig:M1.3} and Fig.~\ref{fig:M5}; the only comparison between the results obtained with the two models that we show pertains to the profile of the internal temperature, shown in Fig.~\ref{fig:M1.3-Tint} and Fig.~\ref{fig:M5-Tint}.

In Fig.~\ref{fig:M1.3-Tint}, corresponding to the case with $M_0 = 1.3$, it may be appreciated that the model proposed in \cite{Kosuge-2019} leads to a profile of the internal temperature dropping to values below the unperturbed one in the region close to the \textit{foot} of the shock profile.
In Fig.~\ref{fig:M5-Tint}, the same profile of the normalized internal temperature, $\normTint$, is shown for the case corresponding to a larger Mach number, $M_0 = 5$. In this case, values of the internal temperature lower than the unperturbed values (i.e. negative values of the normalized internal temperature $\normTint$) obtained by the model presented in \cite{Kosuge-2019} are even more noticeable than in the previous case shown in Fig.~\ref{fig:M1.3-Tint}.
In both cases, the profiles of the internal temperature obtained by means of the model proposed here are physically meaningful, since the profiles show that the internal temperature is monotonically non-decreasing through the shock profile.

As an additional consideration, it might be observed that the results presented in \cite{Kosuge-2019} pertaining the case $\theta = 0$ (i.e. $r \to \infty$), seem to show that the internal temperature, $\TI$, takes on values different from the unperturbed value of the temperature, $T_0$, across the shock structure (specifically, the results show that $\normTint < 0$, i.e. $\TI < T_0$ across the shock structure). On the other hand, the results presented here obtained with the newly developed model suggest that, as $r$ increases, the internal temperature $\TI$ across the shock structure tends to remain constant and equal to the unperturbed temperature $T_0$ (i.e. $\normTint = 0$ across the shock structure). The latter result is in agreement with the fact that $r \to \infty$ corresponds to the physical situation in which the internal molecular mode is \emph{frozen} and, as such, no variation in the internal temperature $\TI$ should be expected in the non-equilibrium region.

\section{Conclusions \label{sec:conclusions}}

In this study, we introduced a novel ES-BGK model of non-polytropic polyatomic gases that incorporates an internal state density function depending solely on the microscopic energy of internal modes and is, therefore, independent from the temperature. This model adheres to conservation laws and is capable of inducing the correct Prandtl number; moreover it upholds the H-theorem, distinguishing it from a model recently proposed in \cite{Kosuge-2019}. Additionally this model allows to obtain a closed system of macroscopic equations making use of the maximum entropy principle (MEP) in the spirit of Rational Extended Thermodynamics (RET).

We also introduced the so-called \textit{reduced version} of this model by incorporating marginal distribution functions. The numerical implementation of the reduced model enabled us to investigate the structure of planar shock waves in carbon dioxide (CO$_2$) and to make comparative assessments against results obtained from the previous model \cite{Kosuge-2019}. It is noteworthy that, for the reduced model and shock waves, we did not need to calculate $\phiI$ explicitly through the inverse Laplace transform. Nevertheless, for general solutions of the kinetic model, we must compute the expressions of $\phiI$, which can be challenging also numerically.

\smallskip 

Future studies will delve into areas not covered in this paper. In particular, they will include:
\begin{enumerate}[(i)]
\item The closure via MEP for this model is now possible and an evaluation of the production terms appearing in the macroscopic field equations obtained in the framework of RET using the collisional term proposed here;
\item An extension of the ES-BGK model proposed here in order to model separately the molecular internal modes of rotation and vibration. Preliminary investigations on this point can be found in the BGK model for collisional processes presented in \cite{Arima-2017}, and in the development of an ES-BGK model accommodating for rotational and discrete vibrational modes \cite{Dauvois-2021, Mathiaud-2022};
\item An analysis of the structure of standing planar shock waves in gases with a different interpolating function for the specific heat $c_v(T)$ than the one defined in Eq.~\eqref{eq:cv-poly}, and considering other interesting physical cases of $c_v(T)$ for different gases.
\end{enumerate}

\begin{appendices}
 
\section{MEP and proof of inequality (\ref{hmaxx}) \label{App:Max}}

We first recall a brief history of the maximum entropy principle (MEP) that was developed by Jaynes in the context of the theory of information \cite{Jaynes-1957a, Jaynes-1957b}. 

The applicability of MEP to nonequilibrium thermodynamics was originally proposed in 1967 by Kogan \cite{Kogan-1969}. A precise equivalence between MEP and RET, in the $13$ moment case, was proved in 1987 by Dreyer \cite{Dreyer-1987}; then, the MEP procedure was applied in 1993 by M\"uller and Ruggeri \cite{Muller-1993}, also for degenerate gases, to the general case of any number of moments, where it was proved for the first time that the closed system is symmetric hyperbolic if one chooses the Lagrange multipliers as field variables. The MEP was proposed again and popularized three years later by Levermore \cite{Levermore-1996}. The complete equivalence between the entropy principle and the MEP was finally proved in 1997 by Boillat and Ruggeri \cite{Boillat-1997}. More details are found in \cite{Ruggeri-2021}. For non degenerate gases, the distribution function $f^{(N)}$ that maximizes the entropy \eqref{hgen} under the constraint that the first $N$ moments are prescribed is expressed by \cite{Muller-1993, Muller-1998, Ruggeri-2021}: 
\begin{equation}\label{v2}
    f^{(N)}= \exp{\left(-1-\frac{m}{k_B} \chi^{(N)}\right)},
\end{equation}
where $\chi^{(N)}$
%scalar product of the vector of Lagrange multipliers and the vector of the integrants (??????) of $N$ moments that 
is the generalization of Eq.~\eqref{ident} to the case with $N$ moments.

\smallskip 

Concerning the inequality \eqref{hmaxx}, since the function $H(f)$ defined in \eqref{HHH} is concave, we have the following inequality:
\begin{equation}\label{v1}
    H\left(f\right) \leq H\left(f_0\right) + H^\prime\left(f_0\right) \left(f-f_0\right). 
\end{equation}
Let us choose $f_0$ as the distribution function $f^{(N)}$, then, from Eq.~\eqref{v1}, Eq.~\eqref{v2} and Eq.~\eqref{hgen}, we have 
\begin{equation}\label{v3}
    h \leq h^{(N)} + m \inta \chi^{(N)} \left(f - f^{(N)}\right)\, \phiI \da.
\end{equation}
As the first $N$ moments of $f$ and $f^{(N)}$ are equal, the second term on the right-hand side of Eq.~\eqref{v3} disappears and then, the inequality \eqref{hmaxx} holds.

\section{Proof of Inequality (\ref{det_inequality})\label{App:det} }

As the proof closely follows the elegant method proposed by Dauvois et al. \cite{Dauvois-2021}, we provide a concise presentation. The primary distinction in our approach is the adoption of a single internal mode, unlike the original work.

Since $\det{\TT}$ is characterized by a parameter $\nu$, let us introduce
\begin{align}
	\phi\left(\nu\right) = \log\left(\det \TT\right) = \sum_{i=1}^3 \log \left\{\theta \frac{p}{\rho} + \left(1 - \theta\right)\frac{1}{\rho}\left( \nu \lambda_i^\pp + \left(1-\nu\right)\PP\right) \right\}.
\end{align}
This is a concave function because the argument of the logarithm function is positive due to the definite positiveness of $\TT$, and has a maximum at $\nu=0$ since $\phi'\left(0\right) = 0$. With the use of the arithmetic and geometric means, we can prove $\phi\left(-\frac{1}{2}\right) \geq \phi\left(1\right)$, and therefore $\phi\left(\nu\right) \geq \phi\left(1\right)$.
The derived inequality provides
\begin{align}
	\det \TT \geq \prod_{i=1}^3 \frac{1}{\rho} \left(\theta p + \left(1 - \theta\right)\lambda_i^\pp\right),
\end{align}
then, we have
\begin{align}
	\frac{\det\left(\pp/\rho \right)}{\det \TT}  \leq \frac{\prod_{i=1}^3 \lambda_i^\pp}{\prod_{i=1}^3 \left(\theta p + \left(1 - \theta\right)\lambda_i^\pp\right)}.
\end{align}
From this inequality, we obtain
\begin{align}\label{det_ineq2}
	\log \frac{\det\left(\pp/\rho \right)}{\det \TT}  &\leq \log \frac{\prod_{i=1}^3 \lambda_i^\pp}{\prod_{i=1}^3 \left(\theta p + \left(1 - \theta\right)\lambda_i^\pp\right)}
    = \sum_{i=1}^3 \log \left(\frac{\lambda_i^\pp}{\theta p + \left(1 - \theta\right)\lambda_i^\pp}\right)\\
    &\leq 3 \log \left(\frac{\PP}{\theta p + \left(1 - \theta\right)\PP}\right) 
    = \log \left(\frac{\TK}{\TKrel}\right)^3 = \log \left(\frac{\epsK_E\left(\TK\right)}{\epsK_E\left(\TKrel\right)}\right)^3.
\end{align}
Here we adopt $\sum_{i=1}^3 \lambda_i^\pp = 3\PP $ and utilize the Jensen inequality for a concave function in the second inequality. Then, the inequality Eq.~\eqref{det_inequality} is proven.

\end{appendices}

\backmatter

\bmhead{Acknowledgments}

The work of T.~Arima {was partially supported by} JST, PRESTO Grant Number JPMJPR23O1, Japan, and by JSPS KAKENHI Grant Numbers JP22K03912. This research was conducted during the sabbatical leave of author T.~Arima at the University of Bologna.

The work of A.~Mentrelli was partially supported by MUR under the PRIN2022 PNRR project n. P2022P5R22A, and by the Italian National Institute for Nuclear Physics (INFN), grant FLAG.

The work of A.~Mentrelli and T.~Ruggeri was partially supported and carried out in the framework of the activities of the Italian National Group for Mathematical Physics [Gruppo Nazionale per la Fisica Matematica (GNFM/INdAM)].

\bmhead{Data availability}
Not applicable.

\bmhead{Conflict of interest}
The authors have no conflicts of interest to declare.

\bibliography{sn-bibliography-AMR}

\end{document}